\documentclass[12pt]{article}
\usepackage[utf8]{inputenc}
\usepackage{amsmath,amsthm,amssymb}
\usepackage{graphicx}
\usepackage{mdframed}
\usepackage[table,xcdraw]{xcolor}
\usepackage{booktabs}
\usepackage{caption}
\usepackage{mathtools}
\usepackage[margin=1in]{geometry}
\usepackage[title]{appendix}
\usepackage{natbib}
\usepackage{hyperref}
\usepackage{pdfpages}
\usepackage{subfig}
\usepackage{accents}
\usepackage{authblk}
\usepackage{xr}
\usepackage{algorithm}
\usepackage{algpseudocode}
\usepackage{tikz}
\usetikzlibrary{arrows.meta}
\usetikzlibrary{quotes}
\hypersetup{
    colorlinks=true,
    citecolor=blue,
    linkcolor=blue,
    filecolor=blue,      
    urlcolor=magenta
    }
\allowdisplaybreaks
\def\bs{\boldsymbol}
\def\lb{\left\{ }
\def\rb{\right\}}
\def\tx{\textrm}

\def\E{\mathbb E}
\def\P{\mathbb P}
\def\Q{\mathbb Q}

\def\Udot{U(\cdot,\cdot)}
\def\Vdot{V(\cdot,\cdot)}
\def\Ujdot{U^J(\cdot,\cdot)}
\def\Vjdot{V^J(\cdot,\cdot)}
\def\EP{\E^P}
\def\KL{\tx{KL}}

\def\iid{\overset{iid}{\sim}}
\def\EQ{\E^Q}
\def\Ehat{\widehat E}
\def\diff{ \tx{d}}

\def\N{\mathcal N}
\def\TV{\tx{TV}}
\def\genq{\underset{\sim}{>}}

\newcommand{\ind}{\perp\!\!\!\!\perp} 
\theoremstyle{definition}

\newtheorem{theorem}{Theorem}

\newtheorem{prop}{Proposition} 

\newtheorem{corr}{Corollary}

\title{Besag-Clifford e-values for unnormalized testing}
\author[1]{Alexander Dombowsky}
\author[1,2]{Barbara E. Engelhardt}
\author[3,4]{Aaditya Ramdas}
\affil[1]{Gladstone Institute of Data Science and Biotechnology}
\affil[2]{Department of Biomedical Data Science, Stanford University}
\affil[3]{Department of Statistics and Data Science, Carnegie Mellon University}
\affil[4]{Machine Learning Department, Carnegie Mellon University}

\date{\today}

\begin{document}

\maketitle

\begin{abstract}
    Unnormalized probability distributions are frequently used in machine learning for modeling complex data generating processes. Though Markov chain Monte Carlo (MCMC) algorithms can approximately sample from unnormalized distributions, intractability of their normalizing constants renders likelihood ratio testing infeasible. We propose to use the parallel method of Besag and Clifford to generate samples that are exchangeable with the data under the null, to then generate valid e-values for any number of iterations or algorithmic steps. We show that as the number of samples grows, these Besag-Clifford e-values constructed using the unnormalized likelihood ratio are actually log-optimal up to a multiplicative term that diminishes with the mixing time of the Markov chain. Additionally, averaging over the output of multiple chains retains validity while increasing the e-power. We extend Besag-Clifford e-values to the general problem of unnormalized test statistics, which allows application to composite hypotheses, uncertainty quantification, generative model evaluation, and sequential testing. Through simulations and an application to galaxy velocity modeling, we empirically verify our theory, explore the impact of autocorrelation and mixing, and evaluate the performance of Besag-Clifford e-values.
\end{abstract}

\newpage

\section{Introduction}

Let $X = (X_1, \dots, X_n)$ be a random vector of independent and identically distributed (iid) observations. For a simple null hypothesis $\mathcal P = \{ P \}$ and alternative hypothesis $\mathcal Q = \{ Q \}$, the Neyman-Pearson lemma states that the uniformly most powerful level-$\alpha$ test is obtained by thresholding the \textit{likelihood ratio} 
\begin{equation} \label{eq:likelihood-ratio-definition}
    E(X) = \frac{q(X)}{p(X)},
\end{equation}
where $p(x)$ and $q(x)$ are the probability densities or mass functions (pdfs or pmfs) of $P$ and $Q$, respectively. Many probability distributions used in practice are only known up to a normalization constant, including products of experts (PoEs) \citep{hinton1999products}, Markov random fields (MRFs) \citep{kindermann1980markov}, and  restricted Boltzmann machines (RBMs) \citep{fischer2012introduction}. Bayesian models fall into this category as well, since the marginal likelihood, or the integral of the likelihood over the prior distribution, is generally not available in a closed form. In the case of an \textit{unnormalized null distribution}, i.e., $p(x) \propto g(x)$ with $K:=\int g(x) \diff x$ being unknown, one cannot calculate $E(X)$.

$E(X)$ is the simplest example of an e-variable \citep{ramdas2025hypothesis}. Given a null hypothesis specified by a set of distributions $\mathcal P$, an e-variable for $\mathcal P$ is a nonnegative test statistic whose expected value under every $P \in \mathcal P$ is no more than $1$. The likelihood ratio satisfies this criterion, as $\EP[E(X)] = \int \{q(x)/p(x)\}p(x) \diff x = \int q(x) \diff x = 1$. The realization of an e-variable after observing data is known as an \textit{e-value}, in this case, $E(x)$. By Markov's inequality, for any e-variable $T(X)$, $\textbf{1}(T(X) \geq 1/\alpha)$ is a level-$\alpha$ test: $\E^P[\textbf{1}(T(X) \geq 1/\alpha)] = P(T(X) \geq 1/\alpha) \leq \E^P[T(X)]/(1/\alpha) \leq \alpha$. In addition, we can combine e-values through averaging or, if they are independent under the null, taking their product. E-values are appealing for post-hoc and sequential testing; in these settings, approaches based on e-values always guarantee type-1 error control, whereas p-values based methods may not \citep{wang2022false,koning2023post,xu2024post,ramdas2025hypothesis}.

The likelihood ratio is not the only e-variable that exists for simple hypotheses. One could create another e-variable by replacing $q(X)$ with $w(X)$ in equation \eqref{eq:likelihood-ratio-definition}, where $w(x)$ is another pdf or pmf. Since we reject $\mathcal P$ when $T(x) \geq 1/\alpha$, it is appealing for an e-variable to be large when the null is false. Formally, one tries to design e-variables with large \textit{e-power}, defined as $\E^Q[\log T(X)]$ for any e-variable $T(X)$ and alternative $Q$. Importantly, $E(x)$ is the unique \textit{log-optimal} e-value for testing simple hypotheses, i.e., for any other e-value $E^\prime(x)$, $\EQ[\log(E^\prime(X)/E(X))] \leq 0$, with e-power $\EQ[\log E(X)] = \tx{KL}(Q, P)$, the Kullback-Liebler (KL) divergence between $Q$ and $P$. \cite{larsson2025numeraire} showed that a log-optimal e-variable called the numeraire exists for any composite null hypothesis $\mathcal P$, and any alternative $Q$. In the absence of an optimal Neyman-Pearson test, which is typical for general composite nulls,
we can obtain a level-$\alpha$ test by thresholding the numeraire above $1/\alpha$.

Complications arise with unnormalized distributions.
An unnormalized likelihood ratio (ULR) $T(x) = q(x)/g(x)$ is generally not an e-value, since $ \E^P[T(X)] = 1/K$, so we have no guarantee that $\textbf{1}(T(X) \geq 1/\alpha)$ is a level-$\alpha$ test. However, large values of $T(x)$ can be interpreted as evidence against the null. The ULR is a specific example of \textit{unnormalized testing}: rejecting the null when a nonnegative function $T(x)$ is large, but it may be the case that $\E^P[T(X)] > 1 $ or $\E^P[T(X)]$ is unknown. 

A natural progression is to consider the case that $q(x)$ is also known only up to its normalizing constant. Thus, $q(x) \propto h(x)$, so $q(x) = h(x)/L$, where $L = \int h(x) \diff x< \infty$. The null expectation of the test statistic $T(X) = h(X)/g(X)$ is $\EP[T(X)] = L/K \neq 1$. 

Unnormalized test functions also appear in composite hypothesis testing. For a simple null and composite alternative, if $\hat q$ is a pre-trained model for the alternative hypothesis, then $T^\prime(x) = \hat q(x)/p(x)$ is an e-value. And yet, $T(x) = \hat q^{\tx{MLE}}(x)/p(x)$, where $q^{\tx{MLE}}(x)$ is the maximized likelihood over the alternative, is conceptually appealing for testing and dominates $T^\prime(x)$, but $T(x)$ is often not an e-value and is thus technically unnormalized.

In this article, we use simulations from the null to normalize $T(x)$ into an e-value, allowing for valid unnormalized testing. 
However, we do not assume that we can simply sample exactly from $P$, which would be a strong assumption. Instead, we will assume that we can construct a Markov Chain with $P$ as its mixing distribution, but without any knowledge or bound on the mixing times, which is a much weaker assumption. For example, one can construct a Markov chain whose unique stationary distribution is $P$ with the Metropolis-Hastings algorithm \citep{metropolis1953equation}, so long as the proposal mechanism is chosen carefully.

Our approach is motivated by the following property. If $Y^{(1)}, \dots, Y^{(M)}$ are generated in a manner so that, if $X \sim P$, then $(X, Y^{(1)}, \dots, Y^{(M)})$ are exchangeable, then
\begin{equation} \label{eq:gof-p-value}
    \hat P^{\tx{gof}} = \frac{1 + \sum_{m=1}^M \textbf{1}(T(Y^{(m)}) \geq T(X))}{M+1}
\end{equation}
is a valid p-value for any nonnegative test statistic $T(X)$. $\textbf{1}(\hat P^{\tx{gof}} \leq \alpha)$ is a level-$\alpha$ test and known as a goodness-of-fit test or a Monte Carlo test. The choice of $T(x)$ is arbitrary but has an impact on power; it is often advantageous to include information on the alternative, $Q$, when specifying $T(x)$. The numerator of equation \eqref{eq:gof-p-value} computes the rank of $T(X)$ among $(T(X), T(Y^{(1)}), \dots, T(Y^{(M)}))$, and so we reject $P$ when $T(X)$ is large relative to $T(Y^{(m)})$ for $m=1, \dots, M$. There are several methods for obtaining MCMC samples that are exchangeable with $X$ under the null, most notably the parallel method of \cite{besag1989generalized}, which we review below (Section~\ref{section:MCMC}). Since $\hat P^{\tx{gof}}$ is a p-value, there is no general guarantee for post-hoc or data-dependent stoppage validity. However, the ratio of $T(X)$ to the mean of $T(X), T(Y^{(1)}), \dots, T(Y^{(M)})$, or the ``soft-rank," is an e-variable \citep{wang2022false} under the null exchangeability assumption underlying equation \eqref{eq:gof-p-value}.
\begin{prop} \label{prop:MC-test}
    For testing a null $\mathcal P$ against an alternative $\mathcal Q$ using observed data $X$, if $(X, Y^{(1)}, \dots, Y^{(M)})$ are exchangeable for any $P \in \mathcal P$, then
    \begin{equation} \label{eq:MC-test}
   \Ehat_M(X) = \frac{(M+1)T(X)}{T(X) + \sum_{m=1}^M T(Y^{(m)})}
    \end{equation}
    is an e-variable for any $M \in \mathbb N$.
\end{prop}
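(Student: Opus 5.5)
The plan is a symmetrization argument that uses only exchangeability. Write $Z_0 = X$ and $Z_m = Y^{(m)}$ for $m = 1, \dots, M$. Set $S = \sum_{j=0}^M T(Z_j)$, which is a symmetric function of $(Z_0, \dots, Z_M)$. Define
\[
R_j = \frac{(M+1)T(Z_j)}{S}, \qquad j = 0, \dots, M,
\]
so that $\Ehat_M(X) = R_0$. Nonnegativity of $R_0$ is immediate because $T \geq 0$.

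The convention for $S = 0$ has to be fixed first. If $S = 0$, then every $T(Z_j) = 0$, and we set $0/0 := 1$ so that $R_j = 1$ for all $j$. (Setting it to $0$ instead would only lower $\Ehat_M$ and so keep it an e-variable.) With this convention, on the event $S > 0$ we have $\sum_{j=0}^M R_j = (M+1)\sum_j T(Z_j)/S = M+1$, and on the event $S = 0$ the sum also equals $M+1$. Hence $\sum_j R_j = M+1$ almost surely.

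Next we show that all the $R_j$ have the same law. Fix $P \in \mathcal P$ and $j \in \{0, \dots, M\}$, and let $\pi$ be the transposition swapping $0$ and $j$. Exchangeability gives $(Z_{\pi(0)}, \dots, Z_{\pi(M)}) \overset{d}{=} (Z_0, \dots, Z_M)$. Write $R_0 = \phi(Z_0, \dots, Z_M)$ for the measurable map $\phi(z_0, \dots, z_M) = (M+1)T(z_0)/\sum_k T(z_k)$, with the convention above. Because the denominator is symmetric, $\phi(Z_{\pi(0)}, \dots, Z_{\pi(M)}) = R_j$. Therefore $R_j \overset{d}{=} R_0$ under $P$, and in particular $\EP[R_j] = \EP[R_0]$, with both values in $[0, M+1]$ and so finite.

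Combining the two facts,
\[
(M+1)\,\EP[\Ehat_M(X)] = \sum_{j=0}^M \EP[R_j] = \EP\Big[\sum_{j=0}^M R_j\Big] = M+1,
\]
so $\EP[\Ehat_M(X)] = 1 \le 1$ for every $P \in \mathcal P$ and every $M \in \mathbb N$. This is exactly the e-variable property.

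The only delicate step is the degenerate event $S = 0$, and setting the convention explicitly as above handles it. Boundedness of each $R_j$ by $M+1$ removes any integrability concerns. No property of $T$ beyond nonnegativity and measurability is used, and no property of how the $Y^{(m)}$ are generated is used beyond joint exchangeability with $X$ under each null $P$.
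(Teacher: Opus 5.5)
Your proof is correct and follows the paper's argument. The normalized terms sum to a constant (your $\sum_j R_j = M+1$; in the paper, the ratio is at most $1$), and exchangeability gives them equal null expectations, which you justify more explicitly through the transposition. The only difference is the $0/0$ convention: the paper takes $0/0=0$ and obtains $\EP[\Ehat_M(X)]\le 1$, with equality when $P(T(X)>0)=1$. Your choice $R_j=1$ on $\{S=0\}$ gives exact equality, and as you note, the paper's version is pointwise smaller and therefore also valid.
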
 
\begin{proof}
    This is a natural consequence of exchangeability. With the convention that $0/0=0$,
    \begin{gather*}
        \frac{T(X) + T(Y^{(1)}) + \dots + T(Y^{(M)})}{T(X) + \sum_{m=1}^M T(Y^{(m)})} \leq 1 \tx{ a.s. } P 
        \\
        \implies (M + 1) \EP \bigg [ \frac{T(X)}{T(X) + \sum_{m=1}^M T(Y^{(m)})} \bigg ] \leq 1,
    \end{gather*}
    where the inequality is an equality if $P(T(X)>0) = 1$. 
\end{proof}

We refer to $\Ehat_M(x)$ as a \textit{Besag-Clifford e-value}.
There are two main considerations that go into designing $\Ehat_M(X)$: (a) the test statistic $T(x)$ and (b) the sampling algorithm. This article studies these two considerations in detail. In particular, for testing simple hypotheses, we propose setting $T(x) \propto E(x)$, e.g., $T(x) = q(x)/g(x)$, the latter quantity being the ULR. Under this choice of $T(x)$, the denominator of $\Ehat_M(X)$ functions as an estimator of $\EP[q(X)/g(X)] = 1/K$. As $M \to \infty$, we show that $\Ehat_M(X)$ converges in probability to a random scalar multiple of the true likelihood ratio $E(X)$.
To motivate our results, we first discuss a simpler case: suppose that we can sample $Y^{(1)}, \dots, Y^{(M)} \iid P$ independently of $X$, i.e., our MCMC algorithm samples exactly and independently from the target distribution. Then if $T(x) \propto E(x)$, the associated Besag-Clifford e-variable converges almost-surely to the log-optimal e-variable.

\begin{theorem} \label{thm:ergodic-log-optimal-iid}
    Suppose that $T(x) < \infty$ and $T(x) \propto q(x)/p(x)$ for all $x \in \mathbb R^n$. Let $X \sim W$, where $W \in \{ P, Q\}$. Then,
    \begin{equation*}
        \Ehat_M(X) \overset{a.s.}{\longrightarrow} E(X).
    \end{equation*}
\end{theorem}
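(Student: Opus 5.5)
We write $T(x) = c\,q(x)/p(x)$ for a constant $c>0$ and rewrite the Besag-Clifford e-value as
\begin{equation*}
\Ehat_M(X) = \frac{T(X)}{\frac{1}{M+1}T(X) + \frac{M}{M+1}\cdot \frac{1}{M}\sum_{m=1}^M T(Y^{(m)})}.
\end{equation*}
The constant $c$ cancels between numerator and denominator. We may therefore take $c=1$, so that $T(x)=E(x)$.

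The strategy is to handle the denominator with the strong law of large numbers, conditionally on $X$. Since $Y^{(1)},\dots,Y^{(M)} \iid P$ independently of $X$, and $T \geq 0$ with
\begin{equation*}
\EP[T(Y)] = \int \frac{q(y)}{p(y)}\,p(y)\,\diff y \leq 1
\end{equation*}
(equality when $Q \ll P$), the SLLN gives $\frac{1}{M}\sum_{m=1}^M T(Y^{(m)}) \to \EP[T(Y)]$ almost surely. This holds on an event of probability one that does not depend on $X$, by independence and Fubini. Because $T(X)<\infty$ pointwise, we also have $T(X)/(M+1) \to 0$ surely, and $M/(M+1)\to 1$. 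The denominator therefore converges almost surely to $\EP[T(Y)]$, and by continuity of division (denominator limit positive) $\Ehat_M(X) \to T(X)/\EP[T(Y)]$ almost surely under the joint law of $(X, Y^{(1)}, Y^{(2)},\dots)$. This argument is the same whether $W=P$ or $W=Q$, since only the law of the $Y$'s enters the limit of the denominator and $X$ is just held fixed.

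The main obstacle is normalization, specifically showing that $\EP[T(Y)] = 1$. This requires $q(x)=0$ wherever $p(x)=0$, i.e.\ $Q\ll P$, which is implicit in $E(x)=q(x)/p(x)$ being well defined and finite. I would state this assumption explicitly, or adopt the convention that $T$ is defined on the support of $P$. Under it, $\int (q/p)\,p = \int q = 1$, so the limit equals $T(X)=E(X)$. For general $c$ the limit is $cE(X)/c = E(X)$.

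Two minor points remain. When $T(X)=0$, the $0/0=0$ convention gives $\Ehat_M(X)=0=E(X)$ for all $M$, so the limit holds trivially. If $W=Q$ puts mass where $p=0$, we invoke the same absolute continuity assumption, so that almost surely $X$ lies where $E(X)$ is finite.
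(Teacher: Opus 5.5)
Your proposal is correct and follows essentially the same route as the paper: apply the strong law of large numbers to the Monte Carlo average in the denominator, note that $T(X)/(M+1)\to 0$, and invert by continuity, with the $0/0=0$ convention covering $T(X)=0$. Your explicit use of $Q\ll P$ to get $\EP[T(Y)]=1$ makes visible an assumption the paper's proof uses only implicitly, when it writes $E(X)=KT(X)$ with $\int T(x)p(x)\,\diff x=1/K$.
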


The proof is detailed in Section~\ref{supp:proofs}, but the result essentially follows from the strong law of large numbers. 
We emphasize that validity holds \textit{for all} $M$ by Proposition~\ref{prop:MC-test}, but Theorem~\ref{thm:ergodic-log-optimal-iid} states that setting $T(x) \propto q(x)/p(x)$ results in the highest e-power as $M \to \infty$, analogous to the e-power motivation for choosing the likelihood ratio as an e-variable over other choices.

As an example, suppose we test $\mathcal P = \{ \tx{Poisson}(1)\}$ against $\mathcal Q = \{ \tx{Poisson}(1.1)\}$. We have that $ p(x) \propto \{1/\prod_{i=1}^n /x_i!\}$ and $q(x) \propto 1.1^{\sum_{i=1}^n x_i}/\{\prod_{i=1}^n x_i!\}$, so we set $T(x) = 1.1^{\sum_{i=1}^n x_i}$. Across $1,000$ independent experiments, we simulate $n=100$ observations from a $\tx{Poisson}(1.1)$ distribution, and we compute $\Ehat_M(X)$ for $M \in \{10,100,500,1000 \}$, as well as the true likelihood ratio $E(X)$. Comparing $\log E(X)$ with $\log \Ehat_M(X)$ across all choices of $M$, as $M$ increases, the values of $\Ehat_M(X)$ resemble the true likelihood ratio (Figure~\ref{fig:Poisson-simulations}), agreeing with Theorem~\ref{thm:ergodic-log-optimal-iid}.

\begin{figure}[t]
    \centering
    \includegraphics[scale=0.25]{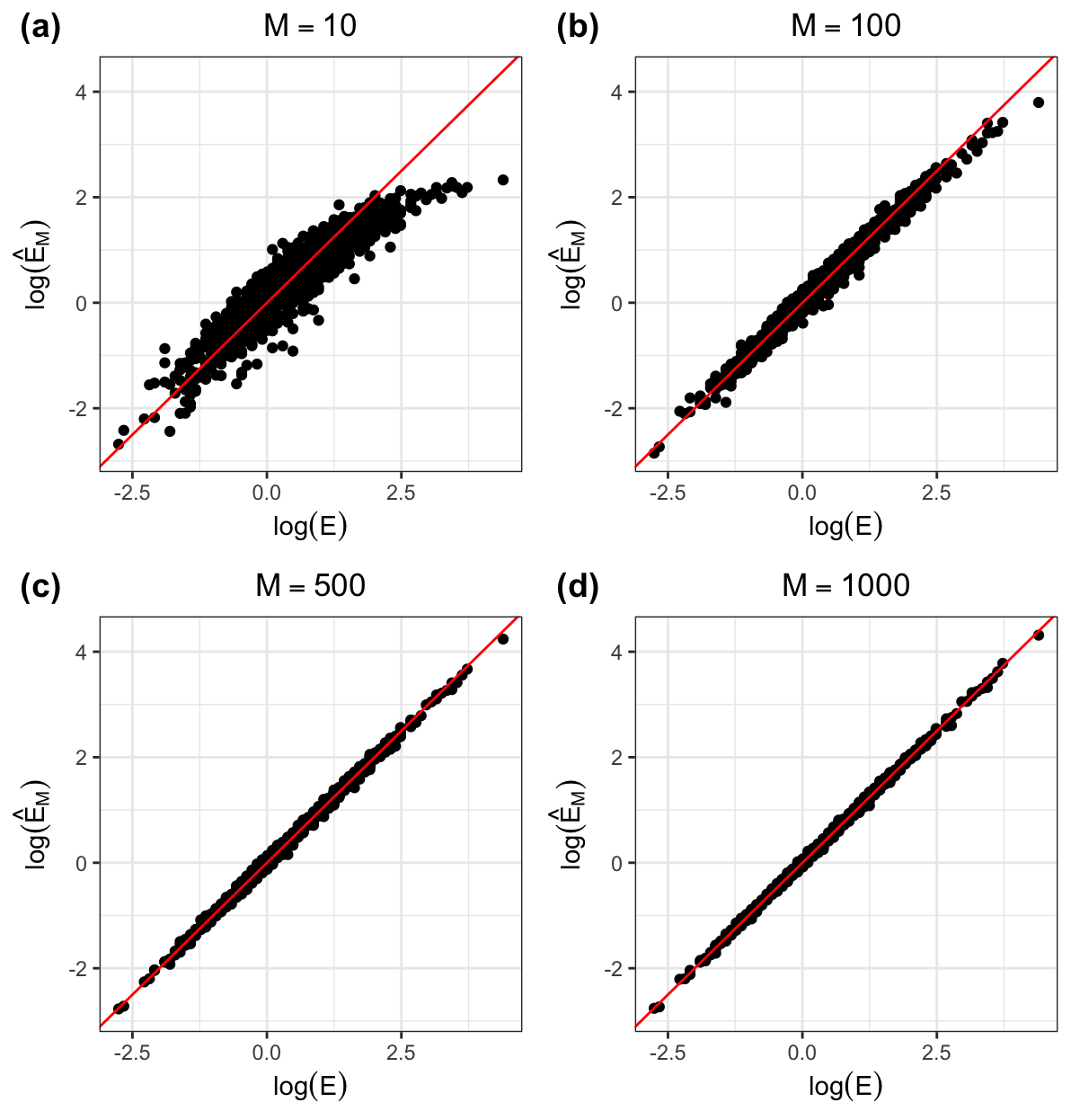}
    \caption{Scatter plots of $1,000$ simulated values of the Poisson$(1)$ and Poisson$(1.1)$ likelihood ratio and $\Ehat_M$ for $M \in \{ 10, 100, 500, 1000 \}$, where $n=100$, and the data are simulated from the alternative. The red line in each plot indicates the identity map.}
    \label{fig:Poisson-simulations}
\end{figure}

When iid samples from $P$ cannot be obtained, but we can sample $Y^{(1)}, \dots, Y^{(M)} \iid U$, we can use a similar proof technique to Theorem~\ref{thm:ergodic-log-optimal-iid} to show that
\begin{equation} \label{eq:biased-theorem-1}
    \Ehat_M(X) \overset{a.s.}{\longrightarrow} \frac{E(X)}{\E^U[E(Y)]},
 \end{equation}
 so long as the denominator, defined as $\E^U[E(Y)] = \int \{ q(y)/p(y) \} u(y) \diff y$, is nonzero. Since $U \neq P$, $\Ehat_M(X)$ converges to a scaled version of the log-optimal e-variable $E(X)$, and the departure from log-optimality is entirely determined by the relationship of $U$ and $P$. If $U \approx P$, then $w(y)/p(y) \approx 1$, and $\E^U[E(Y)] \approx 1$. We focus on a similar setting in this paper, as we consider distributions $P$ that cannot be directly sampled with standard Monte Carlo methods. 

The remainder of this work focuses on the construction and theoretical properties of our \textit{Besag-Clifford e-values}. Section~\ref{section:MCMC} reviews the main concepts of MCMC and the parallel method of \cite{besag1989generalized}, Section~\ref{section:related-work} discusses related literature, and Section~\ref{section:contributions} details our primary contributions. Unnormalized likelihood ratios using Besag-Clifford e-values are proposed in Section~\ref{section:methods}, along with a method to boost e-power using multiple chains and results specific to autoregressive processes. Sections~\ref{section:composite} and~\ref{section:sequential} show how to incorporate composite hypotheses and sequential experiments into our framework. Sections~\ref{section:examples} and~\ref{section:simulations} focus on examples and the performance of Besag-Clifford e-values on simulated data. An application of Besag-Clifford e-values to galaxy velocity data contained in Section~\ref{section:application}. Lastly, concluding remarks and extensions are given in Section~\ref{section:discussion}. 

\subsection{Markov chain Monte Carlo} \label{section:MCMC}

A Markov chain is a discrete stochastic process $Y^{(0)}, Y^{(1)}, Y^{(2)}, \dots,$ that satisfies the Markov property: 
\begin{equation*}
    Y^{(j)} \ind (Y^{(0)}, \dots, Y^{(j-2)}) \mid Y^{(j-1)}.
\end{equation*}
The forward transition kernel $U(y,y^\prime)$ is the conditional distribution of a new state $Y^\prime$ given an initial state $Y$, which admits a density $u(y,y^\prime)$. Forward simulation of a Markov chain occurs via $Y^{(0)} \sim \Pi$ and $Y^{(j)} \mid Y^{(j-1)} \sim U(Y^{(j-1)},\cdot)$. The $J$-step transition kernel is the conditional distribution of $Y^{(J)} \mid Y^{(0)}$, and is denoted $U^J(y,y^\prime)$, with $u^J(y,y^\prime)$ being its density. A distribution $P$ is a \textit{stationary distribution} of a Markov chain if sampling $Y^{(0)} \sim P$ and $Y^{(1)} \mid Y^{(0)} \sim U(Y^{(0)},\cdot)$ implies $Y^{(1)} \sim P$, and so sequential sampling generates identically distributed random vectors from $P$. 

The reverse transition kernel $V(y^\prime,y)$ is a probability distribution that satisfies the \textit{detailed balance equations}
\begin{equation} \label{eq:detailed-balance}
    p(y)u(y,y^\prime) = p(y^\prime)v(y^\prime,y) \tx{ for all } y,y^\prime \in \mathbb R^n,
\end{equation}
where $v(y^\prime,y)$ is the density of $V(y^\prime, y)$. Stationarity of $P$ is preserved under simulation of the reverse transition kernel. If $V(y^\prime, y) = U(y^\prime, y)$, then the Markov chain is \textit{reversible}. If a stationary Markov chain is \textit{irreducible} and \textit{aperiodic}, sample averages of functions of $Y^{(0)}, \dots, Y^{(M)}$ converge in probability to the true expectation under $P$; these concepts are not critical for understanding our method, but we refer to reader to \cite{levin2017markov} and references therein for their formal definitions.

Therefore, given a sample $Y^{(0)} \sim P$ and a stationary Markov chain for $P$, one can generate autocorrelated samples via iterative simulation from the forward (or backward) transition kernels. In the context of testing, if we initialize a Markov chain at $Y^{(0)} = X$, then under the null hypothesis, the iterates $Y^{(1)}, \dots, Y^{(M)} \sim P$. However, validity of $\hat P^{\tx{gof}}$ relies on exchangeability, not stationarity, and $Y^{(1)}, \dots, Y^{(M)}$ are not exchangeable despite having the same marginal distribution. 

The parallel method of \cite{besag1989generalized} uses stationarity and a backward-forward sampling scheme to ensure exchangeability of $(X,Y^{(1)}, \dots, Y^{(M)})$, and proceeds as follows. First, we evolve the chain backwards in time using $V(X,\cdot)$ for $J$ steps, resulting in $Y^{(0)}$. We then run the chain forwards independently for $J$ steps using $U(Y^{(0)},\cdot)$, yielding $Y^{(1)}, \dots, Y^{(M)}$. Note that these samples can be generated in parallel after $Y^{(0)}$ is obtained. Exchangeability follows by noting that $(X,Y^{(1)}, \dots, Y^{(M)})$ are conditionally independent given $Y^{(0)}$, and that equation \eqref{eq:detailed-balance} implies that $X \mid Y^{(0)} \sim U^J(Y^{(0)},\cdot)$ if $X \sim P$. The parallel method is detailed in Algorithm~\ref{alg:besag-clifford-parallel}, and can be visualized with a directed graph (Figure~\ref{fig:Besag-Clifford-DAG}) \citep{barber2022testing}. When $X \not \sim P$, then $X,Y^{(1)}, \dots, Y^{(M)}$ are not excheangeable, giving our method power under the alternative.

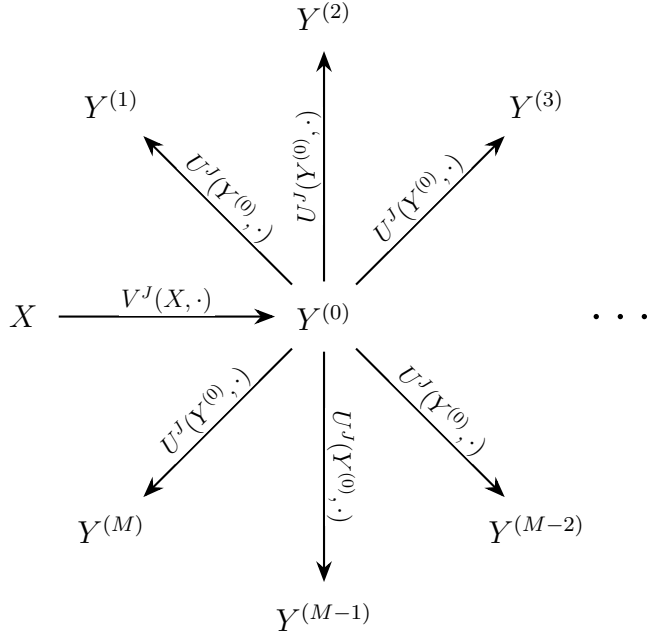
\begin{figure}[t]
\centering
\begin{tikzpicture}[
    arrow/.style={
        -{Stealth[scale=1.2]}, 
        thick, 
        shorten <=4pt, 
        shorten >=4pt
    },
    edge_label/.style={
        midway, 
        scale=0.8, 
        fill=white,
        inner sep=1pt
    }
]

    \node (Center) at (0,0) {$Y^{(0)}$};
    \node (X) at (180:4cm) {$X$};
    
    \node (Y1) at (135:4cm) {$Y^{(1)}$};
    \node (Y2) at (90:4cm)  {$Y^{(2)}$};
    \node (Y3) at (45:4cm)  {$Y^{(3)}$};
    
    \node at (0:4cm) {\LARGE $\dots$};

    \node (YM2) at (315:4cm) {$Y^{(M-2)}$};
    \node (YM1) at (270:4cm) {$Y^{(M-1)}$};
    \node (YM)  at (225:4cm) {$Y^{(M)}$};
    \draw[arrow] (X) -- (Center) 
        node[edge_label, above] {$V^J(X,\cdot)$};
    \foreach \target in {Y1, Y2, Y3, YM2, YM1, YM} {
        \draw[arrow] (Center) -- (\target) 
            node[edge_label, sloped, above] {$U^J(Y^{(0)},\cdot)$};
    }

\end{tikzpicture}
\caption{Graphical representation of Algorithm~\ref{alg:besag-clifford-parallel}. Starting from $X$, we evolve the chain backwards for $J$ steps to generate $Y^{(0)}$. From $Y^{(0)}$, the chain is evolved forwards for $J$ steps in parallel, producing $Y^{(1)}, \dots, Y^{(M)}$. $X$ and the draws are exchangeable when $X \sim P$.}
\label{fig:Besag-Clifford-DAG}
\end{figure}

\begin{algorithm}[t]
\caption{Parallel Method of \cite{besag1989generalized}}
\label{alg:besag-clifford-parallel}
\begin{algorithmic}[1]
\Require{$J$-step forward transition kernel $\Ujdot$, $J$-step backward transition kernel $\Vjdot$, initial state $X$.}
    \State Sample $Y^{(0)} \sim V^J(X, \cdot)$.
    \For{$m=1, \dots, M$} 
        \State Sample $Y^{(m)} \sim U^J(Y^{(0)},\cdot)$;
    \EndFor
\State \Return $(Y^{(1)}, \dots, Y^{(M)})$.
\end{algorithmic}
\end{algorithm}

If the chain is reversible, then forward and backward simulation involve the same conditional distribution. When the normalizing constant of $p$ is unknown, we can apply the parallel method with an MCMC algorithm. Fundamental MCMC algorithms include the Metropolis-Hastings algorithm, Gibbs sampling \citep{geman1984stochastic}, the Metropolis-adjusted Langevin algorithm (MALA) \citep{besag1994comment}, and Hamiltonian Monte Carlo (HMC) \citep{duane1987hybrid,neal2011mcmc}; each of these algorithms only requires the unnormalized kernel $g(x)$ for simulation. For distribution families that admit sufficient statistics, exact co-sufficient sampling \citep{bartlett1937properties, engen1997stochastic} and approximate co-sufficient sampling \citep{barber2022testing,zhu2023approximate, bhaduri2025conditioning} can be used to generate samples that are exchangeable under the null for composite hypotheses.

\subsection{Related work} \label{section:related-work}

Our approach leverages the algorithms for Monte Carlo testing \citep{besag1989generalized}, but the validity of $\hat P^{\tx{gof}}$ and $\Ehat_M(X)$ under an exchangeable joint distribution follows from results on permutation tests \citep{fisher1956mathematics}. Large MCMC sample properties for p-value based Monte Carlo significance tests were established in \cite{howes2026markov}, including convergence of $\hat P^{\tx{gof}}$ as $M \to \infty$. A sequential variant of Monte Carlo testing with $\hat P^{\tx{gof}}$ was detailed in \cite{besag1991sequential}.
The vast majority of existing work on Monte Carlo significance testing uses p-values, rather than e-values. We build our framework around the latter due to the appealing properties for post-hoc and sequential testing.

More specifically, this paper contributes to the growing literature on e-values and e-processes \citep{shafer2001probability,vovkevalues2021,grunwald2024safe,howard2020time}; see \cite{ramdas2025hypothesis} for an overview.
$\Ehat_M(X)$ is an example of a soft-rank e-variable \citep{wang2022false,koning2023post,balinsky2024enhancing}, though our construction via MCMC simulation is novel. By proposing novel randomized e-values, we expand on the concepts of universal inference \citep{wasserman2020universal}, log-optimality \citep{larsson2025numeraire}, and testing by betting \citep{shafer2021testing, waudbysmith2023estimating}. 

With regard to unnormalized distributions, \cite{wei2022high} designs e-values for multiple testing problems in Markov random fields (MRFs) \citep{kindermann1980markov}, and  \cite{martinez2025sequential} uses the kernelized Stein discrepancy to define a class of e-values and e-processes; neither of these approaches uses null exchangeability algorithms. Several recent works have also drawn on the Besag-Clifford methodology. \cite{ramdas2023permutation} propose a permutation test with an analogous probabilistic structure to Algorithm~\ref{alg:besag-clifford-parallel}, and  \cite{fischer2025sequential} introduce sequential Monte Carlo testing by betting. In the latter work, the authors assume that (a) one can generate $T(Y^{(1)}), \dots, T(Y^{(M)})$ so that they are exchangeable conditional on $T(X)$, and (b) the null hypothesis is that $T(X),T(Y^{(1)}),\dots, T(Y^{(M)})$ are exchangeable. Indeed, our set-up is a special case of these conditions. We also focus on a similar alternative to theirs when specifying a Besag-Clifford e-value in equation \eqref{eq:MC-test}, namely that $T(X)$ is stochastically larger than $T(Y^{(m)})$. However, the goals and methods of each approach differ, their construction is an e-process with respect to indicators $\textbf{1}(T(Y^{(m)}) \geq T(X))$, whereas the Besag-Clifford e-process we introduce is an e-process with respect to new data and draws.

Simulations from unnormalized models have previously been used for maximum likelihood estimation, most notably in \cite{geyer1992constrained} and \cite{geyer1994convergence} with asymptotic guarantees. We refrain from using this methodology in the interest of validity of $\Ehat_M(X)$ for any $M \in \mathbb N$, as established in Proposition~\ref{prop:MC-test}. Sampling from unnormalized distributions is a keystone of modern Bayesian inference \citep{hoff2009first}, as the posterior is proportional to the product of the likelihood function and the prior distribution, both of which are known. Averages of MCMC samples, such as that in the denominator of $\Ehat_M(X)$, are routinely calculated for parameter estimation in Bayesian statistics. Parameters in unnormalized models may be estimated through score-matching \citep{hyvarinen2005estimation}, contrastive divergence \citep{hinton2002training}, and noise-contrastive estimation \citep{gutmann2012estimation}. We use the former in for a test for between competing pre-trained PoE models (Section~\ref{section:application}).

\subsection{Summary of contributions} \label{section:contributions}

Below, we summarize the main properties and extensions of Besag-Clifford e-values. 
\begin{enumerate}
    \item For a simple null hypothesis $\mathcal P = \{P \}$ and a simple alternative hypothesis $\mathcal Q = \{ Q \}$, suppose we sample $(Y^{(1)}, \dots, Y^{(M)})$ using Algorithm~\ref{alg:besag-clifford-parallel} and a general MCMC algorithm. As $M \to \infty$, a Besag-Clifford e-value with $T(x) \propto E(x)$ is log-optimal up to a random scalar $\Delta^J(Y^{(0)})$ that depends on $Y^{(0)}$ and $J$, in the sense that the probability-limit of $\Delta^J(Y^{(0)}) \Ehat_M(X)$ is the likelihood ratio $E(X)$.
    \item The scalar $\Delta^J(Y^{(0)})$ is the likelihood ratio of $q^*(Y^{(0)}) = \int q(y)V^J(y,Y^{(0)}) \diff y$ and $p(Y^{(0)})$. In some cases, we can write $Y^{(0)}$ as a random function of $X$. We can then express the scalar in terms of $X$, denoted $\Delta^J(X)$. $\Delta^J(X)$ is an exact e-variable for $P$ and $1/\Delta^J(X)$ is an exact e-variable for $Q$, i.e.,
    \begin{equation*}
        \EP[\Delta^J(X)] = \EQ \bigg [ \frac{1}{\Delta^J(X)} \bigg] = 1;
    \end{equation*}
    thus, we interpret $\Delta^J(x)$ as a randomized e-value. The e-power of $\Delta^J(X)$ is related to the mixing of the MCMC algorithm, and is arbitrarily small when $J$ is set to be the mixing time of the chain with respect to the KL divergence.
    \item For testing the mean or variance of a Gaussian distribution, we derive closed form expressions of $\Delta^J(X)$ when the autoregressive process of order $1$ \citep{brockwell1991time} is used in Algorithm~\ref{alg:besag-clifford-parallel}. Furthermore, the e-power of $\Delta^J(X)$ scales as $\mathcal O(\phi^{2J})$ in both testing problems, where $-1 < \phi < 1$ is the $1$-step autocorrelation.  
    \item For composite nulls $\mathcal P$ with finite cardinality, we show that generating samples from each $P \in \mathcal P$ using Algorithm~\ref{alg:besag-clifford-parallel} can be used to construct a composite Besag-Clifford e-value and a confidence region.
    \item We use these results to design a general class of \textit{Besag-Clifford e-processes} for testing simple (or finite cardinality) hypotheses, ensuring control of the type-1 error probability at any data-dependent stopping time. 
    \item If $T(X)$ is an e-variable but there exists a hypothesis for which it is under-powered ($\EQ[T(X)] \leq 1$), then $\EQ[\log \Ehat_M(X)] \geq \EQ[\log T(X)]$ for any $M$ when $Y^{(m)} \iid P$.
\end{enumerate}

\section{Intractable likelihood ratio testing} \label{section:methods}

\subsection{Simple hypotheses}

Assume that we are testing a simple null hypothesis, $\mathcal P = \{ P \}$, against a simple alternative hypothesis, $\mathcal Q = \{ Q \}$, where $Q \ll P$. Recall that $E(X) = q(X)/p(X)$ is the log-optimal e-variable for this test. We use this section to discuss the asymptotic behavior of $\Ehat_M(X)$ when applying the parallel method (Algorithm~\ref{alg:besag-clifford-parallel}) to generate exchangeable samples. First, for a fixed $J$ number of steps, the probability-limit of $\Ehat_M(X)$ with $T(x) \propto q(x)/p(x)$ is a scalar multiple of $E(X)$ as $M \to \infty$.

\begin{theorem} \label{thm:Besag-parallel}
    Suppose that $(Y^{(1)}, \dots, Y^{(M)})$ are simulated from the parallel method and $T(x) \propto q(x)/p(x)$. For any $y^\prime \in \mathbb R^n$, let 
    \begin{equation} \label{eq:parallel-remainder}
        \Delta^J(y^\prime) =  \int \frac{q(y)}{p(y)} u^J(y^\prime, y) \diff y, 
    \end{equation}
    and assume there exists a $J \in \mathbb N$ so that $0 < \Delta^J(y^\prime) < \infty$ for all $y^\prime \in \mathbb R^n.$ For $X \sim W \in \{P,Q\}$ and $Y^{(0)} \mid X \sim V^J(X, \cdot)$,
    \begin{equation} \label{eq:parallel-pr-limit}
        \Ehat_M(X) \overset{pr.} {\longrightarrow} E^J(X) = \frac{E(X)}{\Delta^J(Y^{(0)})}.
    \end{equation}
\end{theorem}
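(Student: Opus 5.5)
Write $T(x) = c\,q(x)/p(x)$ for a constant $c>0$. The constant cancels in $\Ehat_M$, so
\begin{equation*}
\Ehat_M(X) = \frac{E(X)}{\frac{1}{M+1}E(X) + \frac{M}{M+1}\cdot\frac{1}{M}\sum_{m=1}^M E(Y^{(m)})}.
\end{equation*}
The plan is to condition on $(X, Y^{(0)})$ and apply a conditional law of large numbers to the average of the $E(Y^{(m)})$. The continuous mapping theorem then finishes the argument, mirroring the strong-law proof of Theorem~\ref{thm:ergodic-log-optimal-iid} but with $U^J(Y^{(0)},\cdot)$ in place of $P$.

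\textbf{Step 1: structure of the draws.} By construction of Algorithm~\ref{alg:besag-clifford-parallel}, conditionally on $(X, Y^{(0)}) = (x, y')$ the draws $Y^{(1)}, \dots, Y^{(M)}$ are iid from $U^J(y',\cdot)$. This holds regardless of whether $X \sim P$ or $X \sim Q$: the law of $X$ only affects the law of $Y^{(0)}$ through $V^J(X,\cdot)$. In particular,
\begin{equation*}
\E\big[E(Y^{(1)}) \mid X = x, Y^{(0)} = y'\big] = \int \frac{q(y)}{p(y)}\, u^J(y',y)\diff y = \Delta^J(y'),
\end{equation*}
which is finite and positive for every $y'$ by the hypothesis of the theorem.

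\textbf{Step 2: conditional SLLN.} Fix $(x,y')$. The variables $E(Y^{(m)})$ are nonnegative and iid with finite mean $\Delta^J(y')$, so by the strong law $\frac1M\sum_m E(Y^{(m)}) \to \Delta^J(y')$ almost surely under the conditional law. Integrating the conditional probability of the convergence event, which equals $1$ for every $(x,y')$, over the joint law of $(X,Y^{(0)})$ gives almost sure convergence unconditionally, by Fubini. Here I need measurability of $y'\mapsto\Delta^J(y')$, which follows from Tonelli since $u^J$ is a jointly measurable density. Since $E(X)<\infty$ a.s. (as $T<\infty$), the term $E(X)/(M+1)$ vanishes and the factor $M/(M+1)\to 1$.

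\textbf{Step 3: continuous mapping.} The map $(a,b)\mapsto a/b$ is continuous at $(E(X),\Delta^J(Y^{(0)}))$ because $\Delta^J(Y^{(0)})>0$. Hence $\Ehat_M(X)\to E(X)/\Delta^J(Y^{(0)}) = E^J(X)$ almost surely, and therefore in probability. This is actually stronger than \eqref{eq:parallel-pr-limit}.

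The main obstacle is Step 2: the $Y^{(m)}$ are not iid unconditionally, since they share $Y^{(0)}$, so the unconditional SLLN does not apply directly. The conditioning and Fubini argument is where the care goes, together with making sure that the finiteness of $\Delta^J$ for every $y'$ supplies the integrability the strong law needs. Note that no second moment is required. If almost sure conditioning proved delicate, I would fall back on a conditional weak law: truncate $E(Y^{(m)})$, bound the conditional probability of deviation, and use dominated convergence, since conditional probabilities are bounded by $1$. That route also yields the stated convergence in probability.
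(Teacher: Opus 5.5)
Your proof is correct, and it takes a slightly different route from the paper's, reaching a stronger conclusion.

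\textbf{The paper's route.} The paper applies the weak law conditionally on $Y^{(0)}=y^{(0)}$. It then removes the conditioning on $Y^{(0)}$ by dominated convergence under $V^J(x,\cdot)$, using that conditional probabilities are bounded by one. Next it applies the continuous mapping theorem for each fixed $x$, and finally integrates over $X\sim W$ by dominated convergence again. This is essentially your fallback. It never requires the draws for different $M$ to live on a common probability space, because convergence in probability depends only on the joint law of $(\Ehat_M(X), X, Y^{(0)})$ for each $M$.

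\textbf{Your route.} You condition on $(X,Y^{(0)})$ jointly and apply the strong law to an infinite conditionally iid sequence $Y^{(1)},Y^{(2)},\dots$ drawn from $U^J(Y^{(0)},\cdot)$. You then pass to the unconditional law by integrating the conditional probability of the convergence event against the joint law of $(X,Y^{(0)})$. This needs two things:
\begin{itemize}
\item realizing the draws as one nested sequence, which is harmless for the reason just given;
\item joint measurability of the convergence event, which your Tonelli remark on $y'\mapsto\Delta^J(y')$ supplies, together with the kernel structure of $U^J(y',\cdot)^{\otimes\mathbb N}$.
\end{itemize}

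\textbf{What each buys.} The paper's route is lighter on measure-theoretic setup but yields only convergence in probability. Your route yields almost sure convergence using only the pointwise condition $0<\Delta^J(y')<\infty$. In particular, it shows that the continuity of $\Delta^J$, which the paper imposes when it remarks that the limit also holds almost surely (and again in the sequential proposition), is not actually needed.
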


The proof of Theorem~\ref{thm:Besag-parallel} largely follows from the weak law of large numbers, since $Y^{(1)}, \dots, Y^{(M)}$ are iid conditional on $Y^{(0)}$. The quantity $\Delta^J(Y^{(0)})$ is the expectation of the likelihood ratio under the $J$-step forward transition kernel, initialized at $Y^{(0)}$. We interpret this quantity as a multiplicative bias explained by the difference between $\Udot$ and $P$, similar to the denominator of equation \eqref{eq:biased-theorem-1}. If $u(y,y^\prime) = p(y)$, for instance, then $\Delta^J(y^{(0)}) = 1$ and $\Ehat_M(X)$ converges to the likelihood ratio, as shown in Theorem~\ref{thm:ergodic-log-optimal-iid}. Equation \eqref{eq:parallel-pr-limit} is an almost-sure limit as well provided that $\Delta(y^{(0)})$ is continuous for all $y^{(0)}$, using a similar proof technique to Theorem~\ref{thm:ergodic-log-optimal-iid}.

Our main assumption in Theorem~\ref{thm:Besag-parallel} is that $\Delta^J(y^\prime) \in ( 0, \infty)$ for some $J \in \mathbb N$, which allows us to invert the Monte Carlo average. This condition is light in practice, since the detailed balance equations imply that
\begin{equation} \label{eq:delta-rewritten}
   \Delta^J(Y^{(0)}) = \frac{\int q(y)v^J(y,Y^{(0)}) \diff y}{p(Y^{(0)})}. 
\end{equation}
Like $E(X)$, $\Delta^J(Y^{(0)})$ is a likelihood ratio, but with alternative $q^*(y^\prime) := \int q(y)v^J(y,y^\prime) \diff y^\prime$, or the marginal distribution of $Y^{(0)}$ when $X \sim Q$ in Algorithm~\ref{alg:besag-clifford-parallel}. In summary, we only need absolute continuity of $Q^*$ with respect to $P$ for our assumption to hold. 

For some algorithms, e.g., autoregressive processes (Section~\ref{section:ar1}), we can express $Y^{(0)}$ as a random function of $X$, such as $Y^{(0)} = \phi X + \epsilon^{(0)}$, with $\epsilon^{(0)}$ being random additive noise. Plugging this into equation \eqref{eq:delta-rewritten} gives
\begin{equation*}
    \Delta^J(X) := \frac{q^*(\phi X + \epsilon^{(0)})}{p(\phi X + \epsilon^{(0)})}.
\end{equation*}
It follows that $ \EP[\Delta^J(X)] = \EQ [ 1/{\Delta^J(X)}] = 1$,
or that $\Delta^J(x)$ is an e-value for $P$ and $1/\Delta^J(x)$ is an e-value for $Q$. More generally, we have that $\E^{V^J}[\Delta^J(Y^{(0)}) \mid x]$ is an exact e-value for $P$ and $\E^{V^J}[1/\Delta^J(Y^{(0)}) \mid x]$ is an exact e-value for $Q$, where the expectation defining these two objects is with respect to $V^J(x,\cdot)$. 

We can also interpret $\Delta^J(Y^{(0)})$ in terms of the mixing time of the Markov chain, specifically the backward transition kernel. Informally, if $v^J(y,y^\prime) \approx p(y^\prime)$, then $\int q(y)v^J(y,y^\prime) \diff y \approx p(y^\prime)$ and $\Delta^J(Y^{(0)}) \approx 1$. The departure from log-optimality of $E^J(X)$ is 
\begin{gather*}
    \EQ \bigg [\log \frac{E^J(X)}{E(X)} \bigg ] = - \E^{Q^*}[\log \Delta^J(Y^{(0)})],
\end{gather*}
or the negative e-power of $\Delta^J(Y^{(0)})$. Since $\Delta^J(Y^{(0)})$ is the likelihood ratio between $Q^*$ and $P$,
\begin{equation} \label{eq:delta-e-power}
    \E^{Q^*} [ \log {\Delta^J(Y^{(0)})} ] = \tx{KL}\left(Q^*, P\right) \leq \int q(y)\tx{KL}(V^J(y,\cdot),P) \diff y,
\end{equation}
where the inequality follows because the KL divergence is convex. For any $\epsilon>0$, define the \textit{KL-mixing time} of the backward chain to be
\begin{equation*}
    J_{\tx{KL}}(\epsilon) = \min \lb J : \max_y \KL(V^J(y,\cdot),P) < 2 \epsilon^2 \rb.
\end{equation*}
We remark $J_{\tx{KL}}(\epsilon) \geq J(\epsilon)$, where $J(\epsilon) = \min \lb J: \max_{y} d_\TV(V^J(y,\cdot),P)  < \epsilon \rb$ is the mixing time of $\Vdot$ \citep{levin2017markov}, by Pinsker's inequality. We use the inequality in equation \eqref{eq:delta-e-power} to relate the KL mixing time to the e-power of $\Delta^J(Y^{(0)})$ in the following proposition.

\begin{prop} \label{prop:mixing-time}
    Let $\epsilon>0$ and suppose that $J_{\KL}(\epsilon)<\infty$. Then, $ \E^{Q^*}[ \log {\Delta^{J_{\KL}(\epsilon)}(Y^{(0)})} ] < 2 \epsilon^2$.
\end{prop}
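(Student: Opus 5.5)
The argument is a direct chaining of equation \eqref{eq:delta-e-power} with the definition of the KL-mixing time. Fix $\epsilon>0$ and set $J = J_{\KL}(\epsilon)$. This $J$ is finite by hypothesis and is a well-defined minimum of a nonempty subset of $\mathbb N$.

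First, I will use the identity \eqref{eq:delta-rewritten} to recognize $\Delta^J(Y^{(0)})$ as the likelihood ratio $q^*(Y^{(0)})/p(Y^{(0)})$. Here $q^*(y') = \int q(y) v^J(y,y')\,\diff y$ is the density of the law $Q^*$ of $Y^{(0)}$ when $X\sim Q$. Taking its log-expectation under $Q^*$ then gives $\E^{Q^*}[\log\Delta^J(Y^{(0)})] = \KL(Q^*,P)$.

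Second, I will bound this divergence. Write $Q^* = \int q(y)\, V^J(y,\cdot)\,\diff y$, a $Q$-mixture of the kernels. Joint convexity of KL in its first argument, with the second argument $P$ held fixed, then gives
\[
\KL(Q^*,P) \le \int q(y)\,\KL(V^J(y,\cdot),P)\,\diff y,
\]
which is the inequality in \eqref{eq:delta-e-power}. The convexity step is Jensen applied to the convex functional $R\mapsto \KL(R,P)$. Equivalently, it is the data-processing/chain-rule bound obtained by comparing the joint laws $Q(\diff y)V^J(y,\diff y')$ and $Q(\diff y)P(\diff y')$ and then marginalizing.

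Third, I will bound the integrand uniformly: $\KL(V^J(y,\cdot),P) \le \max_{y}\KL(V^J(y,\cdot),P) < 2\epsilon^2$ for every $y$, by definition of $J_{\KL}(\epsilon)$. Integrating against the probability density $q$ preserves the bound, so $\KL(Q^*,P) \le \max_y \KL(V^J(y,\cdot),P) < 2\epsilon^2$. This is the claim.

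The only delicate point, and the main obstacle, is keeping the inequality strict and ensuring measurability/integrability. The strict inequality survives because we pass through the maximum, a single number strictly less than $2\epsilon^2$, rather than integrating a pointwise strict bound. I will read $\max$ as $\sup$ if the maximum is not attained, in which case the definition still makes the supremum strictly below $2\epsilon^2$. All integrands are nonnegative, so Tonelli's theorem justifies interchanging integrals in the convexity step without further integrability assumptions.
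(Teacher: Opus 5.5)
Your proof is correct and is essentially the paper's argument: identify $\E^{Q^*}[\log \Delta^J(Y^{(0)})]$ with $\KL(Q^*,P)$, bound it with the convexity inequality in \eqref{eq:delta-e-power}, and then apply the uniform KL-mixing bound integrated against $q$. The detour through the maximum is harmless but not needed: integrating a pointwise strict bound against a probability density already keeps the inequality strict, and that is exactly the paper's step.
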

 
Proposition~\ref{prop:mixing-time} states that if we take $J \geq J_{\tx{KL}}(\epsilon)$ steps, then $\EQ[\log E^J(X)/E(X)] > -2 \epsilon^2.$ Combined with Theorem~\ref{thm:Besag-parallel}, using a fast-mixing MCMC algorithm and a large number of samples $M$ results in an e-variable that is approximately log-optimal. We conclude by commenting that the convergence theorem for Markov chains guarantees a finite $J(\epsilon)$ value for an irreducible and aperioidic chain \citep{levin2017markov}. If $J=J(\epsilon)$ in Algorithm~\ref{alg:besag-clifford-parallel}, Proposition~\ref{prop:delta-difference} in the supplemental material states $\int  \int q(y) |v^{J(\epsilon)}(y,y^{(0)}) - p(y^{(0)}) | \diff y \diff y^{(0)}$ is strictly less than $2\epsilon$; however, we require the stronger condition on the KL-mixing time to bound $\EQ[\log E^J(X)/E(X)]$.

\subsection{Multiple Markov chains} \label{sect:multiple-markov-chains}

A Besag-Clifford e-value $\widehat E_M(x)$ relies on running $M \geq 1$ MCMC algorithms with the same initialization point, which is $Y^{(0)}$. If the MCMC algorithm has poor mixing, $J$ is small, or $Y^{(0)}$ is in the tails of $P$, then the power of $\textbf{1}(\Ehat_M(X) \geq 1/\alpha)$ may decline since, ideally, we could compare $X$ to iid samples from $P$. In this section, we robustify against negative impacts to e-power by using parallel sweeps of Algorithm~\ref{alg:besag-clifford-parallel}, taking inspiration from MCMC practices in Bayesian inference  \citep{sountsov2024running,margossian2025nested}. Let $Y^{(0)}_1, \dots, Y^{(0)}_S$ be a list of $J$-step backward evolutions of the MCMC chain initialized at $X$. If $X \sim P$, $Y^{(0)}_s \sim P$, and if $X \sim Q$, then $Y^{(0)}_s \sim  Q^* $. Then, let $\Ehat_{M,s}(x)$ be a Besag-Clifford e-value (equation \eqref{eq:MC-test}) computed using independent $J$-step forward transitions from $Y_s^{(0)}$ and test statistic $T_s(X)$. Define
\begin{gather}
    \bar E_{M,S}(X) = \frac{1}{S} \sum_{s=1}^S \Ehat_{M,s}(X); \label{eq:indpenendent-MCMC-avg}
\end{gather}
$\bar E_{M,S}(X)$ is simply the average of the Besag-Clifford e-variables across the chains. There are two key properties of $\bar E_{M,S}(X)$: validity and increased e-power.

\begin{prop} \label{prop:e-bar}
    For any non-negative test statistics $T_s(x)$ for $s=1, \dots, S$,
    \begin{itemize}
        \item[(a)] $\bar E_{M,S}(x)$ is a valid e-value for all $M,S \in \mathbb N$;
        \item[(b)] If $T_s(x) = T(x)$ for all $s=1, \dots, S$, then $\EQ[\log \bar E_{M,S}(X)] \geq \EQ[\log \Ehat_M(X)]$ for all $M,S \in \mathbb N$.
    \end{itemize}
    \begin{proof}
        Part (a) is a fundamental property of e-values:
        \begin{gather*}
            \EP[\bar E_{M,S}(X)] = (1/S)\sum_{s=1}^S \EP[\Ehat_{M,s}(X)] \leq (1/S) \cdot S = 1.
        \end{gather*}
        Then $P(\bar E_{M,S} \geq 1/\alpha) \leq \alpha$. For (b), concavity of $\log(z)$ gives that
        \begin{equation*}
            \log \bar E_{M,S}(X) \geq \frac{1}{S} \sum_{s=1}^S \log \Ehat_{M,s}(X)
        \end{equation*}
        almost surely. The result follows because if $T_s(x) = T(x)$, and the same MCMC algorithm is used for all $s$, $\Ehat_{M,s}(X)$ are identically distributed.
    \end{proof}
\end{prop}

Proposition~\ref{prop:e-bar}(b) states that we increase the e-power of our initial e-variable $\Ehat_M(X)$ by using multiple chains, whereas part (a) confirms type-1 error protection for $\textbf{1}(\bar E_{M,S}(X) \geq 1/\alpha)$; both guarantees apply for any $M$ and $S$. In other words, averaging over multiple chains boosts the e-power of the initial Besag-Clifford e-value without sacrificing validity. In addition, one could use different MCMC algorithms $U_s(\cdot, \cdot)$, numbers of steps $J_s$, or numbers of samples $M_s$, and Proposition~\ref{prop:e-bar}(a) would still hold due to the exchangeability guarantee in \cite{besag1989generalized}. A further consequence of Proposition~\ref{prop:e-bar} is that we can define a \textit{running average LRT}, where we produce $\bar E_{M,1}, \bar E_{M,2}, \dots,$ by running sequential MCMC chains and rejecting $\mathcal P$ at the smallest $S$ so that $\bar E_{M,S} \geq 1/\alpha$ \citep{ramdas2025hypothesis}. 

In the case where we implement parallel runs of the same MCMC algorithm, we cannot achieve log-optimality exactly. Though Proposition~\ref{prop:e-bar} guarantees an increase in the e-power of $\bar E_{M,S}(X)$ over that of $\Ehat_M(X)$, autocorrelation in $\Udot$ leads to a departure from log-optimality, even when we observe infinitely many MCMC samples across infinitely many chains.

\begin{prop} \label{prop:Besag-multi-chain}
    Under the same assumptions as Theorem~\ref{thm:Besag-parallel}, as $S \to \infty$,
    \begin{equation} \label{eq:parallel-pr-limit-multi-chain}
        \frac{1}{S} \sum_{s=1}^S \frac{E(X)}{\Delta^J(Y_s^{(0)})} \overset{pr.} {\longrightarrow} E(X) \E \bigg [ \frac{1}{\Delta^J(Y^{(0)})} \bigg | X \bigg].
    \end{equation}
\end{prop}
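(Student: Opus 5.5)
The plan is to condition on $X$ and apply a weak law of large numbers to the backward draws. In the multi-chain construction, $Y^{(0)}_1, \dots, Y^{(0)}_S$ are $J$-step backward evolutions from the same initial state $X$, each obtained by an independent run of $V^J(X,\cdot)$. Consequently, conditional on $X = x$, they are iid with law $V^J(x,\cdot)$. Set $Z_s = 1/\Delta^J(Y^{(0)}_s)$. By the assumption of Theorem~\ref{thm:Besag-parallel}, $0 < \Delta^J(y') < \infty$ for every $y'$, so each $Z_s$ is a well-defined, finite, nonnegative random variable. The left side of \eqref{eq:parallel-pr-limit-multi-chain} is then $E(X)\cdot \frac{1}{S}\sum_{s=1}^S Z_s$. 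Since $E(X)$ is a fixed finite factor given $X$, it suffices to show that
\begin{equation*}
\frac{1}{S}\sum_{s=1}^S Z_s \overset{pr.}{\longrightarrow} \E\bigg[\frac{1}{\Delta^J(Y^{(0)})} \bigg| X\bigg].
\end{equation*}

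The key steps, in order, are as follows. First, for fixed $x$, apply the weak law of large numbers to the iid sequence $Z_s$ under $V^J(x,\cdot)$. This gives
\begin{equation*}
\P\bigg( \bigg|\frac{1}{S}\sum_{s=1}^S Z_s - m(x)\bigg| > \eta \,\bigg|\, X = x\bigg) \to 0,
\end{equation*}
where $m(x) = \int \{1/\Delta^J(y')\} v^J(x,y') \diff y'$. Second, integrate this conditional probability over the law $W \in \{P,Q\}$ of $X$, using dominated convergence since the probabilities are bounded by $1$. This upgrades conditional convergence to unconditional convergence in probability. Third, multiply by $E(X)$. Because $E(X)$ is almost surely finite, it is tight, so the product of a tight random variable with a sequence converging in probability to zero still converges to zero (Slutsky). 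The same argument was used to pass from the Monte Carlo average to $E^J(X)$ in Theorem~\ref{thm:Besag-parallel}.

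The main obstacle is integrability: we need $m(x) < \infty$ for almost every $x$ under $W$, and $\Delta^J > 0$ alone does not ensure this. I would handle it with the reversed-likelihood-ratio interpretation from \eqref{eq:delta-rewritten}. There $1/\Delta^J(y') = p(y')/q^*(y')$, and $\E^{Q^*}[1/\Delta^J(Y^{(0)})] \le 1$, as noted when $1/\Delta^J$ was shown to be an e-value for $Q$. Hence under $W = Q$, the tower property gives $\EQ[m(X)] \le 1$, so $m(X) < \infty$ $Q$-almost surely. Under $W = P$, the same conclusion requires the condition $P \ll Q$ (or directly assuming $\EP[m(X)] < \infty$). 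If this fails, the weak law still holds on the event $\{m(X) = \infty\}$, in the extended sense that the averages diverge to $+\infty$, matching the right-hand side. I would state this finiteness condition explicitly as part of ``the same assumptions.''
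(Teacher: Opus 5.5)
Your proposal is correct and follows essentially the same route as the paper's proof. The paper applies the weak law of large numbers conditionally on $X=x$, where $Y^{(0)}_1,\dots,Y^{(0)}_S$ are iid from $V^J(x,\cdot)$, and then integrates out $X$ by dominated convergence.

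Your explicit integrability check is a point the paper leaves implicit. The extra condition you propose for $W=P$ is not actually needed. By your own tower-property argument, the set $\{x: \E[1/\Delta^J(Y^{(0)}) \mid X=x]=\infty\}$ is $Q$-null. Hence $q(X)=0$, and so $E(X)=0$, $P$-almost surely on that set. Both sides of the limit then vanish there under the usual convention $0\cdot\infty=0$.
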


As shown in equation \eqref{eq:delta-rewritten}, $\E[1/\Delta^J(Y^{(0)}) \mid X]$ is an exact e-variable for $Q$ and can be interpreted in terms of the mixing: if $v^J(y,\cdot) \approx p(y)$, then $\E[1/\Delta^J(Y^{(0)}) \mid X] \approx 1$. Moreover, $\E[1/\Delta^J(Y^{(0)}) \mid X]$ does not depend on $Y^{(0)}$, removing sensitivity to the initial backward steps in Algorithm~\ref{alg:besag-clifford-parallel}. By Jensen's inequality and the law of total expectation, $\EQ[\log\E[1/\Delta^J(Y^{(0)}) \mid X]] \geq - \E^{Q^*}[\log\Delta^J(Y^{(0)})]$, i.e., adding more chains pushes the probability limit closer to log-optimality from the single chain case.

In conclusion, the parallel method of \cite{besag1989generalized} with $T(x) \propto E(x)$ results in approximate log-optimality as $M \to \infty$. The mixing of the MCMC algorithm is paramount for the e-power at large values of $M$; if $J$ is set to the KL-mixing time, then in the large $M$ limit our procedure is log-optimal up to an arbitrary positive constant. We can improve the e-power of $\Ehat_M(X)$ via parallel MCMC runs, though this alone is not sufficient to ensure log-optimality, even as $S \to \infty$.

\subsection{Example: autoregressive models} \label{section:ar1}

Consider the autoregressive model of order $1$, or AR(1) model,
\begin{equation} \label{eq:AR1-model}
    Y^{(j)} = \phi Y^{(j-1)} + \gamma \epsilon^{(j)},
\end{equation}
where $\epsilon^{(j)} \sim \mathcal N(0,1)$, $\gamma>0$, $|\phi| < 1$, and $\epsilon^{(j)} \ind Y^{(k)}$ for all $j,k \in \mathbb N$. The forward transition kernel is $U(y, \cdot) = \N(\phi y, \gamma^2 )$, and the stationary distribution is $P = \N(0,\gamma^2/\{1 - \phi^2\})$. Under these conditions, the AR(1) process is irreducible, aperiodic, and reversible, i.e., $V(y,y^\prime) = U(y,y^\prime) = \N(\phi y,\gamma^2)$. Assume $\gamma^2 = 1 - \phi^2$, meaning that the stationary distribution is $P=\N(0,1)$ and the $J$-step transition kernel is $U^J(y,\cdot) = \N(\phi^J y, 1-\phi^{2J})$. Lastly, take the alternative distribution to be $Q = \N(\mu,\sigma^2)$ for some $\mu \in \mathbb R$ and $\sigma^2>0$. We re-derive the results from the previous sections under two testing scenarios, where the AR$(1)$ process is used in Algorithm~\ref{alg:besag-clifford-parallel} as an MCMC sampler.

\subsubsection{Mean-shift}
First, we fix $\mu \in \mathbb R$ and $\sigma^2 = 1$ as a test for \textit{mean-shift}. The likelihood ratio for these hypotheses is
\begin{equation} \label{eq:AR1-mean-LR}
    E(x) = \exp \lb \mu x - \frac{\mu^2}{2} \rb.
\end{equation}
The moment generating function (MGF) of a normal distribution yields the following expression for $\Delta^1(Y^{(0)})$,
\begin{equation} \label{eq:AR1-mean-Delta}
    \Delta^1(y^{(0)}) = \exp \lb  \phi  \mu y^{(0)} - \frac{\phi^2 \mu^2}{2} \rb \tx{ for } \mu \in \mathbb R.
\end{equation}
$\Delta^1(y^{(0)}) \in (0,\infty)$ for all $y^{(0)} \in \mathbb R$ and $\Delta^1(y^{(0)})$ is continuous, so Theorem~\ref{thm:Besag-parallel} is an almost-sure guarantee. As $(\gamma^2, \phi^2) \to (1,0)$, then $\Delta^1(y^{(0)}) \to 1$, and $E^1(X) \to E(X)$ almost surely. In this special limiting case, equation \eqref{eq:AR1-model} is simply independent sampling from the $\N(0,1)$ distribution, and so Theorems~\ref{thm:ergodic-log-optimal-iid} and~\ref{thm:Besag-parallel} are equivalent statements. Equation \eqref{eq:AR1-mean-LR} shows that $\Delta^1(y^{(0)})$ is the likelihood ratio for testing $\mathcal P = \{ \N(0,1) \}$ against $\mathcal Q = \{ Q^* \}$, where $Q^* = \N(\phi \mu, 1)$. We can also derive equation \eqref{eq:AR1-mean-Delta} from equation \eqref{eq:delta-rewritten} since $\int q(y) V(y,\cdot) \diff y = \N(\phi \mu, 1) = Q^*$.

Since $Y^{(0)} = \phi X + \gamma \epsilon^{(0)}$ with $\epsilon^{(0)} \sim \N(0,1)$, $\Delta^1(Y^{(0)})$ may be written explicitly in terms of $X$:
\begin{equation*}
    \Delta^1(X) := \Delta^1(\phi X + \gamma \epsilon^{(0)}) = \exp \lb \phi^2 \mu X + \phi \mu \gamma \epsilon^{(0)} - \frac{\phi^2 \mu^2}{2}   \rb.
\end{equation*}
We can verify that $1/\Delta^1(X)$ is an exact e-variable for $Q = \N(\mu,1)$ using MGFs and the facts that $\epsilon^{(0)} \ind X$ and $\gamma^2 = 1 - \phi^2$. 
\begin{gather*}
    \EQ\bigg [ \frac{1}{\Delta^1(X)} \bigg ] = \exp \lb \frac{\phi^2 \mu^2}{2} \rb \EQ[\exp \lb - \phi^2 \mu X \rb]\E[\exp \lb - \phi \mu \gamma \epsilon^{(0)} \rb] \\
    = \exp \lb \frac{\phi^2 \mu^2}{2} \rb  \exp \lb - \phi^2 \mu^2 + \frac{\phi^4 \mu^2}{2} \rb \exp \lb \frac{\phi^2 \mu^2 \gamma^2}{2} \rb \\
    = \exp \lb \frac{\phi^4 \mu^2 + \phi^2 \mu^2 (1-\phi^2) - \phi^2 \mu^2}{2} \rb = \exp \lb \frac{\phi^4 \mu^2 + \phi^2 \mu^2 - \phi^4 \mu^2 - \phi^2 \mu^2}{2} \rb = 1.
\end{gather*}
Likewise, we have that $\Delta^1(X)$ is an exact e-variable for $P=\N(0,1)$ using a similar derivation.
\begin{gather*}
    \EP[\Delta^1(X)] = \exp \lb - \frac{ \phi^2 \mu^2}{2} \rb \EP[\exp \lb \phi^2 \mu X \rb] \E[\exp \lb \phi \mu \gamma \epsilon^{(0)} \rb] \\ 
    = \exp \lb - \frac{ \phi^2 \mu^2}{2} \rb \exp \lb \frac{\phi^4 \mu^2}{2} \rb \exp \lb \frac{\phi^2 \mu^2 \gamma^2}{2} \rb \\
    = \exp \lb  \frac{\phi^4 \mu^2 + \phi^2 \mu^2 (1-\phi^2) - \phi^2 \mu^2}{2}\rb = \exp \lb \frac{\phi^4 \mu^2 + \phi^2 \mu^2 - \phi^4 \mu^2 - \phi^2 \mu^2}{2} \rb = 1.
\end{gather*}

For general $J \in \mathbb N$, $\Delta^J(y^{(0)}) = \exp \lb \phi^J \mu y^{(0)} - (\phi^{2J} \mu^2)/2 \rb$ and
\begin{equation} \label{eq:AR1-mean-Delta-j}
     \Delta^J(X) = \exp \lb \phi^{2J} \mu X + \phi^J \mu \sqrt{1-\phi^{2J}} \delta^{(0)} - \frac{\phi^{2J} \mu^2}{2} \rb,
\end{equation}
where $\delta^{(0)} \ind X$ and $\delta^{(0)} \sim \N(0,1)$. Note that this is the same formula as $\Delta^1(Y^{(0)})$ but with $\phi^J$ in place of $\phi$. Therefore, the same proofs show that $\Delta^J(X)$ is an exact e-variable for $P$, and $1/\Delta^J(X)$ is an exact e-variable for $Q$. We can rewrite $\Delta^J(X)$ in terms of the likelihood ratio, as $\Delta^J(X) = E(X)^{\phi^{2J}} e^{ \phi^J \mu \sqrt{1-\phi^{2J}} \delta^{(0)}}$,
or, alternatively,
\begin{equation} \label{eq:AR1-mean-Delta-normal}
    \log \Delta^J(X) = \phi^{2J} \log E(X) + \phi^J \mu \sqrt{1-\phi^{2J}} Z \tx{ where } Z \ind X, \: Z \sim \N(0,1).
\end{equation}
Equation \eqref{eq:AR1-model} shows that $\log \Delta^J(X) \mid X$ is a normal distribution and that $\E^Q[\log \Delta^J(X)] = \phi^{2J} \KL(Q,P) = \phi^{2J} \mu^2/2$, so $E^J(X)$ is log-optimal up to a term that is proportional to $\phi^{2J}$. We can use equation \eqref{eq:AR1-mean-Delta-j} to show that
\begin{equation*}
    \E \bigg [ \frac{1}{\Delta^J(X)} \bigg | X \bigg] = \exp \lb - \phi^{2J} \mu X + \phi^{2J} \mu^2 - \frac{\phi^{4J} \mu^2}{2} \rb,
\end{equation*}
which is the multiplicative remainder in Proposition~\ref{prop:Besag-multi-chain}. $\EQ[\log \E[1/\Delta^J(X) \mid X]] = - \phi^{2J} \mu^2 + \phi^{2J} \mu^2 - \phi^{4J} \mu^2/2 = - \phi^{4J} \mu^2/2$, and so increasing the number of chains decreases the e-power by a factor of $\phi^{2J}$, in the infinite chain and iteration setting. With regard to Proposition~\ref{prop:mixing-time}, $\int q(y) V^J(y,\cdot) \diff y = \N(\phi^J\mu, 1)$, and
\begin{equation*}
    \KL(\N(\phi^J \mu, 1), \N(0,1)) = \frac{\phi^{2J} \mu^2}{2} = O(\phi^{2J}).
\end{equation*}
Then $\EQ[\log \Delta^J(Y^{(0)})] = O(\phi^{2J})$, as we showed in equation \eqref{eq:AR1-mean-Delta-normal}. 
For fixed $\phi$, setting $J$ large enough will result in approximate log-optimality, i.e., $J = \min \{j \in \mathbb N:  \EQ[\log \Delta^J(Y^{(0)})] < 2 \epsilon^2\}$ is finite for all $\epsilon>0$, and is given by $J = \min \{j \in \mathbb N : \phi^{2J} < 4 \epsilon^2/\mu^2\}$. We demonstrate the above results through simulations in Section~\ref{sect:ar1-sims}.

\subsubsection{Rescaling}

Next, we set $\mu = 0$ and $\sigma^2 \neq 1$, effectively testing the \textit{scale} of the data generating process. The likelihood ratio for this problem is
\begin{equation} \label{eq:AR1-LR}
    E(x) = \frac{1}{\sigma} \exp \lb - \frac{x^2(1-\sigma^2)}{2 \sigma^2} \rb.
\end{equation}
Using the property that the square of a standard normal is a chi-squared distribution,
\begin{gather} \label{eq:AR1-Delta}
    \Delta^1(y^{(0)}) = \frac{1}{\sigma\sqrt{1 + \frac{\gamma^2}{\sigma^2}(1-\sigma^2)}} \exp \lb - \frac{\frac{\phi^2 y^{(0)2}}{2\sigma^2}(1-\sigma^2)}{1 + \frac{\gamma^2}{\sigma^2}(1-\sigma^2)} \rb \tx{ for } \frac{1-\sigma^2}{\sigma^2} > - \frac{1}{\gamma^2}.
\end{gather}
By definition, $\Delta^1(y^{(0)}) \geq 0$ for all $y^{(0)} \in \mathbb R$. If $\sigma^2 \leq 1$, then $\Delta^1(y^{(0)})<\infty$, otherwise, we must choose $\gamma^2$ to agree with the condition in equation \eqref{eq:AR1-Delta}. Also, note that $\Delta^1(y^{(0)})$ is a continuous function of $y^{(0)}$, provided that the condition on $\gamma^2$ holds, so the convergence statement in Theorem~\ref{thm:Besag-parallel} holds as an almost-sure statement.

The formula in equation \eqref{eq:AR1-LR} implies that $\Delta^{1}(y^{(0)})$ is the likelihood ratio between $P = \N(0,1)$ and $Q^* = \N(0,\nu^2)$, where $\nu^2 = \sigma^2\{1+(\gamma^2/\sigma^2)(1-\sigma^2)\}$, and recall that $\phi^2 = 1 - \gamma^2$. By equation \eqref{eq:AR1-model} and reversibility, $\int q(y)V(y,\cdot) \diff y = \N(0, \phi^2 \sigma^2 + \gamma^2) = \N(0,\nu^2) = Q^*$, since $\phi^2 \sigma^2 + \gamma^2 = \nu^2$. 
Finally, $\int q(y)V^J(y,\cdot) \diff y = \N(0, \phi^{2J} \sigma^2 + 1 - \phi^{2J})$. Then,
\begin{equation*}
    \KL(\N(0, \phi^{2J} \sigma^2 + 1 - \phi^{2J}), \N(0,1)) = \frac{1}{2} \lb \phi^{2J}( \sigma^2 - 1) - \log(\phi^{2J} \sigma^2 + 1 - \phi^{2J})\rb = O(\phi^{2J}),
\end{equation*}
and so there exists a finite $J$ satisfying approximate log-optimality for all $\epsilon>0$, similar to the mean-shift case. 

\section{Beyond simple hypotheses} \label{section:composite}

In practice, the hypotheses of interest are often not simple. Take, for instance, a parametric model defined by a parameter $\theta \in \mathbb R^n$. For testing $\mathcal P = \{ \theta = 0\}$ against $\mathcal Q = \{ \theta \neq 0\}$, the alternative hypothesis encompasses possibly infinitely many models. In this section, we establish a method for constructing Besag-Clifford e-values that are valid under composite hypotheses. As before, we do not rely on asymptotic guarantees for type-1 error control.

\subsection{Composite nulls and confidence regions}

If $\mathcal P$ is composite but finite, i.e., $\mathcal P = \{P_1, \dots, P_R \}$, but we only have access to $p_r(x) \propto g_r(x)$, we can design an e-variable using minimization. For each null, we set $T_r(X) = q(X)/g_r(X)$ and  simulate $Y^{(1)}_r, \dots, Y_r^{(M)}$ via Algorithm~\ref{alg:besag-clifford-parallel} using an MCMC algorithm that is stationary for $P_r$. Then, let
\begin{equation} \label{eq:finite-null}
    \Ehat_M(X) = \min_{r=1, \dots, R} \widehat E_{M,r}(X), \tx{ where } \widehat E_{M,r}(X) = \frac{(M+1)T_r(X)}{T_r(X) + \sum_{m=1}^M T_r( Y^{(m)}_r)};
\end{equation}
or the minimum value attained over the Besag-Clifford e-variables applied to each null.
\begin{prop} \label{prop:composite-snlr}
    The composite Besag-Clifford e-value in equation \eqref{eq:finite-null} is valid.
\end{prop}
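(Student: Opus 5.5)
Validity of the composite Besag-Clifford e-value in equation \eqref{eq:finite-null} means that $\E^{P_r}[\Ehat_M(X)] \leq 1$ for every $r \in \{1,\dots,R\}$. The plan is to reduce this to the simple-null case already settled in Proposition~\ref{prop:MC-test}, via the elementary bound that a minimum is dominated by each of its terms.

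First, fix an arbitrary null $P_r \in \mathcal P$ and suppose $X \sim P_r$. The draws $Y_r^{(1)},\dots,Y_r^{(M)}$ come from Algorithm~\ref{alg:besag-clifford-parallel} with a Markov chain that is stationary for $P_r$, with forward and backward kernels linked by the detailed balance equations \eqref{eq:detailed-balance} for $p_r$. By the Besag-Clifford argument reviewed in Section~\ref{section:MCMC}, $(X, Y_r^{(1)},\dots,Y_r^{(M)})$ is then exchangeable. Proposition~\ref{prop:MC-test}, applied with the nonnegative statistic $T_r$, gives $\E^{P_r}[\widehat E_{M,r}(X)] \leq 1$. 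This step does not need the normalizing constant of $p_r$: exchangeability and the soft-rank identity are all that is used, and the unknown constant cancels in the ratio anyway.

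Second, observe pointwise that $\Ehat_M(X) = \min_{r'} \widehat E_{M,r'}(X) \leq \widehat E_{M,r}(X)$ almost surely. The remaining e-variables $\widehat E_{M,r'}(X)$ with $r' \neq r$ need no exchangeability under $P_r$; they only need to be nonnegative, which holds because each $T_{r'}$ is nonnegative (with the convention $0/0=0$). Taking expectations under $P_r$, and also over the independent simulation randomness of all $R$ samplers, gives $\E^{P_r}[\Ehat_M(X)] \leq \E^{P_r}[\widehat E_{M,r}(X)] \leq 1$. Since $r$ was arbitrary, the bound holds for every $P \in \mathcal P$. Markov's inequality then gives type-1 error control for $\textbf{1}(\Ehat_M(X) \geq 1/\alpha)$.

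The main point needing care is the first step: one must check that the expectation in Proposition~\ref{prop:MC-test} is taken jointly over $X$ and the simulated draws for chain $r$. The draws from the other chains must be generated so that they do not disturb this joint law. It suffices that each chain's randomness is independent given $X$, or even arbitrary, since the domination in the second step is pathwise and needs only nonnegativity. Beyond this bookkeeping, the argument is immediate.
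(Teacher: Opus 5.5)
Your proof is correct and matches the paper's argument: under $P_r$, the minimum is pathwise dominated by $\widehat E_{M,r}(X)$, and $\E^{P_r}[\widehat E_{M,r}(X)] \leq 1$ by Proposition~\ref{prop:MC-test}, since $(X, Y_r^{(1)}, \dots, Y_r^{(M)})$ is exchangeable. Your remark that the expectation is taken jointly over all samplers' randomness, and that the other chains need only leave the law of chain $r$ undisturbed, is a useful clarification the paper leaves implicit.
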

\begin{proof}
    By definition,
    \begin{gather*}
        \E^{P_r}[\Ehat_M(X)] \leq \E^{P_r} \bigg [ \frac{(M+1)T_r(X)}{T_r(X) + \sum_{m=1}^M T_r(Y^{(m)}_r)} \bigg] \leq 1, 
    \end{gather*}
    since, if $P_r$ is true, then $X, Y_r^{(1)}, \dots, Y_r^{(M)}$ are exchangeable.  
\end{proof}

As in the simple null case, we can construct an analogous e-variable using equation \eqref{eq:finite-null} if $q(x) \propto h(x)$. Recall that we generally cannot maximize the likelihood exactly without knowing the normalizing constant for each $P_r$. The minimization serves as an approximate maximization of the likelihood, as the denominator converges to $K_r \Delta^{J_r}(Y_r^{(0)})$ with increasing $M$ for each $r \in \{1, \dots, R \}$.

Similarly, we can define a confidence region for parametric models. In this context, $\mathcal P = \{ P_\theta: \theta \in \Theta \}$ is a distribution family, $\Theta= \{ \theta_1, \dots, \theta_R\}$ is a parameter space, and the e-value and MCMC samples now depend on $\theta$. Define
\begin{equation} \label{eq:confidence-region}
    \mathcal B_M(X) = \lb \theta \in \Theta: \widehat E_{M,\theta}(X) < \frac{1}{\alpha} \rb.
\end{equation}
Then $\mathcal B_M(X)$ is a valid confidence region for $\theta$, since
\begin{equation} \label{eq:valid-confidence-interval}
    P_\theta(\theta \not \in \mathcal B_M(X)) = P_\theta\{\widehat E_{M,\theta}(X) \geq 1/\alpha\} \leq \E^{P_\theta}[\widehat E_{M,\theta}(X)] \leq \alpha.
\end{equation}
Algorithm~\ref{alg:besag-clifford-cr} is a simple method for computing a \textit{Besag-Clifford confidence region}. Note that the region $\mathcal B_M(X)$ is randomized (i.e., it depends on the MCMC samples), though we can stabilize our inferences and decrease the expected interval width by setting $M$ large and averaging over multiple MCMC chains, as introduced in equation \eqref{eq:indpenendent-MCMC-avg}. The coverage guarantee does not depend on $\Theta$ being finite, but the feasibility of Algorithm~\ref{alg:besag-clifford-cr} does, since we cannot produce MCMC samples for infinitely many stationary distributions. 

We emphasize that a Besag-Clifford confidence region differs from a Bayesian credible interval, though both use MCMC samples for uncertainty quantification. In the latter case, we run a single MCMC algorithm to generate samples from some unnormalized probability distribution and express uncertainty using the empirical distribution of these samples, e.g., the quantiles \citep{hoff2009first}. Our approach runs a separate MCMC algorithm for each $\theta \in \Theta$, sacrificing computational efficiency in favor of validity (equation \eqref{eq:valid-confidence-interval}). However, Algorithm~\ref{alg:besag-clifford-cr} is applicable for parallelization, where chains that are stationary for each member of $\mathcal P$ are run simultaneously \citep{sountsov2024running,margossian2025nested}.

\begin{algorithm}[t]
\caption{Besag-Clifford Confidence Region}
\label{alg:besag-clifford-cr}
\begin{algorithmic}[1]
\Require{Stationary MCMC algorithms for $P_\theta$, empty confidence set $\mathcal B_M(X) = \emptyset$.}
    \For{$\theta \in \Theta$} 
        \State Calculate $\widehat E_{M,\theta}(X)$ using Algorithm~\ref{alg:besag-clifford-parallel};
        \If {$\widehat E_{M,\theta}(X) < 1/\alpha$} $\theta \in \mathcal B_M(X)$
        \EndIf
    \EndFor
\State \Return $\mathcal B_M(X)$.
\end{algorithmic}
\end{algorithm}

\subsection{Composite alternatives}

If $\mathcal Q$ is composite, we can use our methodology to create a level-$\alpha$ test by setting $T(x) \propto \hat q(x)/p(x)$, where $\hat q$ is an estimator of the distribution of $X$ under the alternative hypothesis, e.g., setting $\hat q(x) = \prod_{i=1}^n \hat q_{i-1}(x_i)$, where $\hat q_{i-1}$ is an estimate of $q$ using $x_1, \dots, x_{i-1}$ \citep{wasserman2020universal}. Our methodology is also applicable to $\hat q(x) = \prod_{i=1}^n \hat q_{n}(x_i;x)$, where $\hat q_n(x_i;x)$ is an estimate of $q$ using the full dataset. Even if the normalizing constant of $p(x)$ is known, $T(x) = \{ \prod_{i=1}^n \hat q_n(x_i;x) \}/p(x)$ is not an e-value, but has null expectation $\EP[T(X)] = \int \prod_{i=1}^n \hat q_n(x_i;x) \diff x \neq 1$. Proposition~\ref{prop:MC-test} applies here, meaning that we can always design a level-$\alpha$ test with this choice of $T(x)$. If this integral is finite, $\hat E_M(X)$ converges in probability to the appropriately normalized e-variable under iid sampling, or a scaled version of that e-variable if the parallel method is used to sample $P$, by Theorem~\ref{thm:ergodic-log-optimal-iid} and~\ref{thm:Besag-parallel}, respectively. A special case is when $T(x)$ is already an e-value, but $\EP[T(X)] < 1$. If $Y^{(m)} \iid P$, then $\widehat E_M(X) \overset{a.s.}{\to} T(X)/\E^P[T(X)] > T(X)$, so the Besag-Clifford e-value has higher e-power than the original e-value for arbitrarily large $M$.

\section{Besag-Clifford e-processes} \label{section:sequential}

Say that we observe data $X_1, X_2, \dots $ sequentially and would like to construct a likelihood ratio testing procedure that is valid at any (possibly data-dependent) stopping time. \textit{E-processes} are stochastic processes of e-variables that result in anytime-valid tests \citep{wasserman2020universal,ramdas2025hypothesis}. Initially, we assume $\mathcal P = \{P\}$ is a simple null, which, in the sequential context, refers to observing an iid sequence $X_1, X_2, \dots, $ from $P$. For $t>0$, let $Y_t^{(1)}, \dots, Y_t^{(M_t)}$ be simulated from $X_t$ using Algorithm~\ref{alg:besag-clifford-parallel}, \textit{independently} across time. Then, define
\begin{align} \label{eq:sequential-snlr}
    \widehat U_t(X_t) & = \frac{(M_t+1) T(X_t)}{T(X_t) + \sum_{m=1}^{M_t} T( Y_t^{(m)})}; & \Ehat_t & = \prod_{i=1}^t \{ 1 - \lambda_{i-1} + \lambda_{i-1} \widehat U_i(X_i) \}),
\end{align}
where $\lambda_{i-1}$ is measurable with respect to $\sigma(\widehat U_{i}(X_i): i<t)$, and $\widehat U_t(x_t)$ is a Besag-Clifford e-value. We refer to $\widehat E_t$ as a \textit{Besag-Clifford e-process}, which we justify below.

\begin{prop} \label{prop:sequential-snlr}
    Let $\Ehat_t$ be defined in equation \eqref{eq:sequential-snlr} for all $t>1$, where $\Ehat_0=1$ and $\lambda_0 \in [0,1]$ is fixed. Then $\Ehat_t$ is an e-process for all $M_1, \dots, M_t \in \mathbb N$ with respect to the filtration $\mathcal F_{t} = \sigma(\widehat U_i(X_i):i\leq t)$.
\end{prop}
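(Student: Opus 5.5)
The plan is to show that $\Ehat_t$ is a nonnegative supermartingale under $P$ with respect to $\mathcal F_t$, with initial value $\Ehat_0 = 1$. Ville's inequality, or equivalently optional stopping for nonnegative supermartingales, then gives $\EP[\Ehat_\tau] \leq 1$ for every stopping time $\tau$, which is the defining property of an e-process. I will assume each $\lambda_{i-1} \in [0,1]$; this is needed for nonnegativity and is implicit in the betting form of equation \eqref{eq:sequential-snlr}. Nonnegativity is then immediate: $\widehat U_i(X_i) \geq 0$, so each factor $1-\lambda_{i-1}+\lambda_{i-1}\widehat U_i(X_i)$ is nonnegative. Adaptedness is also immediate, because $\Ehat_t$ is a function of $\widehat U_1(X_1),\dots,\widehat U_t(X_t)$ and of the predictable $\lambda$'s.

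The key step is the one-step bound
\[
\EP\bigl[\widehat U_t(X_t) \mid \mathcal F_{t-1}\bigr] \leq 1 .
\]
Under the null, $X_t$ is independent of $(X_1,\dots,X_{t-1})$. Moreover, the draws $Y_t^{(1)},\dots,Y_t^{(M_t)}$ are produced by Algorithm~\ref{alg:besag-clifford-parallel} from $X_t$ alone, using randomness independent across time. Hence the pair $(X_t, Y_t^{(1:M_t)})$ is independent of $(X_i, Y_i^{(1:M_i)})_{i<t}$, and so $\widehat U_t(X_t)$ is independent of $\mathcal F_{t-1}$. Its conditional expectation therefore equals its unconditional expectation. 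The Besag--Clifford exchangeability argument says that $(X_t, Y_t^{(1)},\dots,Y_t^{(M_t)})$ are exchangeable when $X_t\sim P$. Proposition~\ref{prop:MC-test} then gives $\EP[\widehat U_t(X_t)]\leq 1$, and this holds for any $M_t$.

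Next I pull out the predictable quantities:
\[
\EP[\Ehat_t\mid\mathcal F_{t-1}] = \Ehat_{t-1}\bigl(1-\lambda_{t-1}+\lambda_{t-1}\EP[\widehat U_t(X_t)\mid\mathcal F_{t-1}]\bigr)\leq \Ehat_{t-1}.
\]
This uses the fact that $\lambda_{t-1}$ and $\Ehat_{t-1}$ are $\mathcal F_{t-1}$-measurable and that $\lambda_{t-1}\geq 0$. Iterating from $\Ehat_0=1$ gives the supermartingale property. For an arbitrary stopping time $\tau$, I apply optional stopping to the bounded time $\tau\wedge T$ and then use Fatou's lemma as $T\to\infty$. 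This yields $\EP[\Ehat_\tau]\leq 1$, including on $\{\tau=\infty\}$ via the almost-sure limit of the nonnegative supermartingale.

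The main obstacle I expect is being careful about the filtration and predictability. The statement writes $\lambda_{i-1}$ as measurable with respect to $\sigma(\widehat U_i(X_i): i<t)$; I will read this as $\lambda_{i-1}\in\mathcal F_{i-1}$, which is what the computation requires. The other point to handle carefully is the conditional independence of step $t$'s exchangeable block from the past. It relies on the MCMC randomness being fresh at each time, which the construction explicitly assumes.
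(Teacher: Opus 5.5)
Your proposal is correct and takes the same route as the paper. You pull the predictable $\lambda_{t-1}$ out of the conditional expectation to get $\EP[1-\lambda_{t-1}+\lambda_{t-1}\widehat U_t(X_t)\mid\mathcal F_{t-1}]\leq 1$, conclude that $\Ehat_t$ is a test supermartingale, and hence an e-process. You also correctly supply details the paper leaves implicit: the cross-time independence that reduces the conditional bound to Proposition~\ref{prop:MC-test}, the requirement $\lambda_{i-1}\in[0,1]$, and the optional-stopping/Fatou step from supermartingale to e-process.
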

\begin{proof}
    Since $\lambda_{i-1}$ is measurable with respect to $\mathcal F_{t-1}$, 
    \begin{equation*}
        \EP[1-\lambda_{t-1} + \lambda_{t-1}\widehat U_t(X_t) \mid \mathcal F_{t-1}] = 1 - \lambda_{t-1} + \lambda_{t-1}\E[\widehat U_t(X_t) \mid \mathcal F_{t-1}] \leq 1 - \lambda_{t-1} + \lambda_{t-1} = 1.
    \end{equation*}
    It follows that $\EP[\Ehat_t \mid \mathcal F_{t-1}] = \Ehat_{t-1} \EP[1-\lambda_{t-1} + \lambda_{t-1}\widehat U_t(X_t) \mid \mathcal F_{t-1}] \leq \Ehat_{t-1}$. Therefore, $E_t$ is a test-supermartingale \citep{grunwald2024safe} and, consequently, an e-process.
\end{proof}

The formulation of $\widehat E_t$ is general, and we need not have equal sample sizes $M_t$ or equivalent MCMC algorithms across time in order to ensure validity. For a simple alternative, a special case of a Besag-Clifford e-process is $\lambda_i = 1$ for all $i>0$, resulting in $\Ehat_t = \prod_{i=1}^t \widehat U_t(X_t)$. This e-process is comparable to the likelihood ratio (LR) process $E_t = \prod_{i=1}^t E(X_t)$, and $E_t$ attains an optimal growth rate for sequential testing of simple hypotheses, equal to $\KL(Q,P)$ \citep{ramdas2025hypothesis}. We refer to $\Ehat_t = \prod_{i=1}^t \widehat U_t(X_t)$ as a \textit{ULR process} since it consists of ULRs that have been normalized via the Besag-Clifford samples. Below, we formalize the key property of a ULR process: convergence to the LR process. 

\begin{prop} \label{thm:sequential-besag-clifford}
    Assume there exists a $J \in \mathbb N$ so that $\Delta^J(y^\prime) \in (0,\infty)$ and is continuous for all $y^\prime \in \mathbb R^n$. For all $t \in \mathbb N$, as $M \to \infty$,
    \begin{equation} \label{eq:sequential-besag-clifford}
        \widehat E_t \overset{a.s.}{\longrightarrow} \frac{E_t}{\prod_{i \leq t }\Delta^J(Y_i^{(0)})}.
    \end{equation}
\end{prop}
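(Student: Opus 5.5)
The plan is to reduce the statement to a single-factor almost-sure limit and then take the finite product. Here $\lambda_i = 1$, so $\Ehat_t = \prod_{i=1}^t \widehat U_i(X_i)$, and we let $M_i = M \to \infty$ for each $i \le t$. Since $t$ is fixed and the product of finitely many almost-surely convergent sequences converges almost surely to the product of the limits (on the intersection of $t$ probability-one events), it suffices to show, for each fixed $i$,
\begin{equation*}
\widehat U_i(X_i) \overset{a.s.}{\longrightarrow} \frac{E(X_i)}{\Delta^J(Y_i^{(0)})}.
\end{equation*}
Multiplying these limits and using $E_t = \prod_{i\le t} E(X_i)$ gives \eqref{eq:sequential-besag-clifford}.

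For a fixed $i$, I would repeat the argument of Theorem~\ref{thm:Besag-parallel}, strengthened from convergence in probability to almost-sure convergence. Write $T(x) = c\, q(x)/p(x)$ with $c>0$; the constant cancels in $\widehat U_i$. Dividing numerator and denominator by $M+1$ gives
\begin{equation*}
\widehat U_i(X_i) = \frac{T(X_i)}{\frac{1}{M+1}T(X_i) + \frac{M}{M+1}\cdot\frac{1}{M}\sum_{m=1}^M T(Y_i^{(m)})}.
\end{equation*}
Conditional on $(X_i, Y_i^{(0)})$, the draws $Y_i^{(1)},\dots,Y_i^{(M)}$ are iid from $U^J(Y_i^{(0)},\cdot)$. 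Each has conditional mean
\begin{equation*}
\E\bigl[T(Y_i^{(m)}) \mid Y_i^{(0)}\bigr] = c\,\Delta^J(Y_i^{(0)}),
\end{equation*}
which is finite by assumption. By the strong law of large numbers applied conditionally, the conditional probability that the sample mean converges to $c\,\Delta^J(Y_i^{(0)})$ equals one for every realization of $(X_i, Y_i^{(0)})$. Integrating over the law of $(X_i, Y_i^{(0)})$ (under $W \in \{P,Q\}$) turns this into an unconditional almost-sure statement.

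Since $T(X_i) < \infty$, the term $T(X_i)/(M+1) \to 0$. The denominator therefore converges almost surely to $c\,\Delta^J(Y_i^{(0)})$, which is strictly positive by assumption. The map $(a,b)\mapsto a/b$ is continuous on $b>0$, so the ratio converges almost surely to $E(X_i)/\Delta^J(Y_i^{(0)})$.

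The step needing the most care is the conditional-to-unconditional passage. It requires the event $\{\text{mean} \to c\,\Delta^J(Y_i^{(0)})\}$ to be measurable jointly, so that Fubini applies with the regular conditional law. The continuity hypothesis on $\Delta^J$ is what the paper invokes to upgrade Theorem~\ref{thm:Besag-parallel} to an almost-sure statement, and it guarantees this measurability. I would also note that $\Delta^J(Y_i^{(0)}) > 0$ is needed so the limit of the inverse exists, and that independence across $i$ is not even required for the product argument, only for the e-process property.
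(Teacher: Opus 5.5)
Your proposal is correct and follows the paper's route: the paper treats this as a direct factorwise application of Theorem~\ref{thm:Besag-parallel} upgraded to almost-sure convergence (a conditional law of large numbers given $Y_i^{(0)}$, then integrating out), followed by taking the finite product over $i \le t$. Your conditional-SLLN-plus-Fubini argument is that upgrade written out carefully. One small remark: measurability of $\Delta^J$ already follows from Tonelli applied to its defining integral, so continuity is sufficient there but not essential.
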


The proof is a trivial application of Theorem~\ref{thm:Besag-parallel}. As in the non-sequential case, there is a multiplicative error term in the limit of $\widehat E_t$, denoted $\Delta^J_t: = \prod_{i=1}^t \Delta^J(Y_i^{(0)})$. $\Delta_t^J $ and $1/\Delta_t^J$ are e-processes for $P$ and $Q$ respectively with respect to the filtration $\sigma(Y_i^{(0)}:i \leq t)$. If the alternative is true, then $Y_i^{(0)} \iid Q^*$ for all $i$, and so by the strong law of large numbers, $(1/t)\sum_{i=1}^t\log \Delta^J(Y_i^{(0)}) \overset{a.s.}{\to}\KL(Q^*,P)$, which is negligible for well-mixing chains by equation \eqref{eq:delta-e-power}. Thus, $E_t/\Delta_t^J$ achieves a near-optimal asymptotic growth rate; note that $E_t/\Delta_t^J$ is an e-process by Fatou's lemma applied to Theorem~\ref{thm:Besag-parallel}; we formalize this result below.

\begin{prop}
    Assume that $X_i \sim Q$ for $i=1, \dots, t$. Then, as $t \to \infty$,
    \begin{equation*}
       \frac{\log E_t - \log \Delta_t^J}{t} \overset{a.s.}{\longrightarrow} \KL(Q,P) - \KL(Q^*,P).
    \end{equation*}
\end{prop}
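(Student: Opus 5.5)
We will prove the statement by two separate applications of the strong law of large numbers (SLLN), one to each of the two sums in
\[
\frac{\log E_t - \log \Delta_t^J}{t} = \frac{1}{t}\sum_{i=1}^t \log E(X_i) - \frac{1}{t}\sum_{i=1}^t \log \Delta^J(Y_i^{(0)}).
\]
This identity follows directly from the definitions $E_t=\prod_{i\le t}E(X_i)$ and $\Delta_t^J=\prod_{i\le t}\Delta^J(Y_i^{(0)})$. Since each limit holds almost surely, their difference converges almost surely to the difference of the limits.

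\textbf{First term.} The $X_i$ are iid from $Q$, so the $\log E(X_i)=\log\{q(X_i)/p(X_i)\}$ are iid. Their common mean is $\EQ[\log E(X)]=\KL(Q,P)$. The SLLN therefore gives $t^{-1}\log E_t \to \KL(Q,P)$ a.s. This requires $\log E(X)$ to be $Q$-integrable, which we assume, i.e.\ $\KL(Q,P)<\infty$. The positive part is integrable whenever $\KL(Q,P)<\infty$, because the negative part of $\log(q/p)$ under $Q$ is always integrable: $\EQ[(\log p/q)^+]\le \EQ[p/q]\le 1$.

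\textbf{Second term.} Each $Y_i^{(0)}$ is produced from $X_i$ by an independent backward run $V^J(X_i,\cdot)$ of Algorithm~\ref{alg:besag-clifford-parallel}, with independent randomness across time. Hence the pairs $(X_i,Y_i^{(0)})$ are iid. In particular the $Y_i^{(0)}$ are iid with marginal density $q^*(y')=\int q(y)v^J(y,y')\,\diff y$, so $Y_i^{(0)}\iid Q^*$. By equation \eqref{eq:delta-rewritten}, $\Delta^J(y')=q^*(y')/p(y')$, and this quantity lies in $(0,\infty)$ by assumption, so its logarithm is well defined. Applying the SLLN again gives
\[
t^{-1}\log\Delta_t^J \to \E^{Q^*}[\log \Delta^J(Y^{(0)})]=\KL(Q^*,P) \quad \text{a.s.}
\]

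\textbf{Main obstacle: integrability in the second term.} The same one-sided argument as above makes the negative part of $\log(q^*/p)$ under $Q^*$ integrable. For the positive part we use the convexity bound \eqref{eq:delta-e-power}:
\[
\KL(Q^*,P)\le \int q(y)\,\KL(V^J(y,\cdot),P)\,\diff y .
\]
This is finite whenever $J\ge J_{\KL}(\epsilon)$, since then the integrand is below $2\epsilon^2$. We state that condition, or simply $\KL(Q^*,P)<\infty$, as a standing assumption; it is also implied by data processing, since $\KL(Q^*,P)\le\KL(Q,P)$ because $Q^*$ and $P$ are images of $Q$ and $P$ under the same kernel $V^J$ (using stationarity of $P$ under the backward kernel). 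With both expectations finite, the two SLLN limits combine to give the stated limit $\KL(Q,P)-\KL(Q^*,P)$.
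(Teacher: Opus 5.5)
Your proposal is correct and takes the same route as the paper. The paper gives no separate proof; it justifies the proposition only by the sketch in the text preceding it, namely two applications of the strong law of large numbers, using that $Y_i^{(0)} \iid Q^*$ under the alternative and that $\Delta^J = q^*/p$ by \eqref{eq:delta-rewritten}. Your integrability checks fill in what the paper leaves implicit: the one-sided bound $\E[(\log z)^+]\le \E[z]\le 1$ on the negative parts, and the data-processing bound $\KL(Q^*,P)\le\KL(Q,P)$, which uses stationarity of $P$ under $V$. Together they show that finiteness of $\KL(Q,P)$ is the only extra assumption needed.
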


Our methodology can be extended to the composite null setting by applying the product formula in equation \eqref{eq:sequential-snlr} with $\widehat U_i(X_i)$ being the composite Besag-Clifford e-values introduced in Section~\ref{section:composite}. If all MCMC sampling is carried out independently across time points, Proposition~\ref{prop:sequential-snlr} holds, as $\widehat U_i(X_i)$ is an e-variable conditional on $\widehat U_0(X_0), \dots, \widehat U_{i-1}(X_{i-1})$. We can also integrate multiple chains using the same strategy as equation \eqref{eq:indpenendent-MCMC-avg} and preserve the e-process property. Lastly, Besag-Clifford e-processes are compatible with other standard betting strategies, i.e., methods to select $\lambda_t$. Examples include growth rate adaptive to the particular alternative (GRAPA), approximate GRAPA (aGRAPA), and lower-bound on the wealth (LBOW); see \cite{waudbysmith2023estimating} for technical details on these methods. Sections~\ref{sect:sequential-product-of-experts} and~\ref{sect:sequential-composite-alt} provide simulated examples of the above Besag-Clifford e-processes.

\section{Additional examples} \label{section:examples}

\subsection{Restricted Boltzmann machines}
A key application of our work is testing in restricted Boltzmann machines (RBMs) \citep{fischer2012introduction}, including product of experts (PoE) \citep{hinton1999products} and Markov random fields (MRFs) \citep{kindermann1980markov}. RBMs are expressed in terms of an unnormalized energy function, such as $p(x;\theta) \propto \exp \{ - \mathcal E(x;\theta)\}$, thus a ULR is given by $T(X) = \hat q(X)\exp \{ \mathcal E(X;\theta)\}$. 

The PoE model assumes that
\begin{equation} \label{eq:product-of-experts}
    p(x_i;\theta) \propto \prod_{w=1}^W g_w(x_i;\theta_w),
\end{equation}
where $g_w(x_i;\theta_w)$ are the unnormalized densities of a parametric family. If $g_w$ is a Gaussian distribution, then $P$ is a multivariate Gaussian. However, in practice, $g_w$ is often set to be a t-distribution \citep{welling2002learning}, in which case, $g_w(x_i;\theta_w) \propto \left( 1 + x_i^2/\theta_w \right)^{-(\theta_w+1)/2}$ and $\theta_w > 0$. Noncentrality parameters $\phi_w$ and scale parameters $\sigma_w$ can be incorporated into the PoE model as well. The normalization constant of $p(x_i;\theta)$ is not known, but the PoE model can be trained with contrastive divergence \citep{hinton2002training} or score matching \citep{hyvarinen2005estimation,cai2025fisher}. For instance, suppose we would like to test which number of experts fits the data well, say, $W$ or $W+1$ for some $W>0$. We train both models, resulting in parameters $\theta^W = (\theta^W_1, \dots, \theta^W_W)$ and $\theta^{W+1} = (\theta^{W+1}_1, \dots, \theta_{W+1}^{W+1})$. Then, using held out data $X$, we can test $H_0:W$ against $H_1:W+1$ with $T(X) \propto \prod_{i=1}^n \{ p(X_i;\theta^{W+1})/p(X_i;\theta^W) \}$. The Besag-Clifford e-variable is computed via Algorithm~\ref{alg:besag-clifford-parallel} with any MCMC algorithm that has stationary distribution $\prod_{i=1}^n P(\cdot; \theta^W)$, e.g., Gibbs sampling \citep{hinton2002training} or derivative-based methods such as MALA or HMC. 

Given an undirected graph $G$ with a set of cliques $\mathcal C$, an MRF is defined as
\begin{equation} \label{eq:MRF}
    p(x_i;G) \propto \prod_{c \in \mathcal C} g_c(x_{ic}) = \exp \lb - \sum_{c \in \mathcal C} \phi_c(x_c) \rb,
\end{equation}
where $x_{ic} = (x_{ij})_{j \in c}$ and $\phi_{c}(x_c)$ is a clique-specific energy function. If the energy functions are non-Gaussian, then the normalizing constant of $p(x_i;G)$ is unknown. However, block Gibbs sampling of an MRF is always possible due to the factorization in equation \eqref{eq:MRF}. The Besag-Clifford likelihood ratio test can be applied in this case, e.g., to test between two possible graphs or as a general goodness-of-fit test.

\subsection{Power likelihoods}

Inferences can be made robust to model misspecification via the power likelihood \citep{royall2003interpreting, holmes2017assigning}. The power likelihood ratio is $T(x) = q(x)^\eta/p(x)^\eta$, where $0 < \eta < 1$. For testing a simple null $\mathcal P = \{P\}$, $T(X)$ is an e-variable, i.e., $\EP[T(X)] \leq 1$ \citep{wasserman2020universal}.
We can create an asymptotically more powerful test using our approach. First, suppose that $Y^{(1)}, \dots, Y^{(M)} \iid P$. Theorem~\ref{thm:ergodic-log-optimal-iid} shows that as $M \to \infty$, $\Ehat_M(X) \overset{a.s.}{\to} T(X)/\EP[T(X)] \geq T(X)$, and so the soft-rank e-variable has higher e-power than $T(X)$ as $M \to \infty$. If we cannot generate iid samples from $P$, we use the parallel method, but with a slight adjustment. For this specific problem, we rephrase the hypotheses as $\mathcal P = \{P\}$ and $\mathcal Q = \{ Q^* \}$, where the density of the alternative distribution is $q^*(x) \propto q(x)^\eta p(x)^{1-\eta}$; note that this is a valid probability distribution since $T(X)$ is an e-variable. For a single chain, Theorem~\ref{thm:Besag-parallel} implies $\Ehat_M(X) \overset{pr.}{\to} T(X)/\{\EP[T(X)] \Delta^J(Y^{(0)})\},$
where the denominator is
\begin{equation*}
   \Delta^J(Y^{(0)}) = \int \frac{q(y)^\eta}{\EP[T(X)] p(y)^\eta}u^J(Y^{(0)},y) \diff y = \frac{\int q(y)^\eta p(y)^{1-\eta} v^J(y,Y^{(0)}) \diff y  } {\EP[T(X)]p(Y^{(0)})}.
\end{equation*}
If $J=J_{\KL}(\epsilon)<\infty$, Proposition~\ref{prop:mixing-time} yields 
\begin{gather*}
    \EQ[\log \Delta^J(Y^{(0)})] = \int \int \log \left( \frac{\int q^*(y)v^J(y,y^{(0)}) \diff y}{p(y^{(0)})} \right) q(y)v^J(y,y^{(0)}) \diff y \diff y^{(0)} \\
    = \tx{KL}\left( \int q(y) V^J(y,\cdot) \diff y, P \right)  - \tx{KL}\left( \int q(y) V^J(y,\cdot) \diff y, \int q^*   (y) V^J(y,\cdot) \diff y \right) \\
    %\leq \int q(y)\tx{KL}(V^J(y,\cdot),P) \diff y  %\tx{KL}\left( \int q(y) V^J(y,\cdot) \diff y, \int q^*   (y) V^J(y,\cdot) \diff y \right) \\
    %\leq \int q(y)\tx{KL}(V^J(y,\cdot),P) \diff y 
    \leq \int q(y)\tx{KL}(V^J(y,\cdot),P) \diff y < 2 \epsilon^2,
\end{gather*}
and so we asymptotically improve on the power likelihood ratio up to a positive constant. 

\subsection{E-values}

We now compare Besag-Clifford e-values to standard e-values from a nonasymptotic perspective. Initially, we assume that we can simulate $iid$ from $P$.
We first present a more general result, where we assume that $\tilde T^{(1)}, \dots, \tilde T^{(M)}$ are simulations that are \textit{independent of the data}, so their expectation when $X \sim P$ or $X \sim Q$ is static, and denoted $\E[\tilde T^{(m)}]$. In this more general notation, we define $\Ehat_M(X) = T(X)/\{T(X)/(M+1) + \sum_{m=1}^M \tilde T^{(m)}/(M+1)\}$. 

\begin{theorem} \label{thm:null-exchanged-e-power}
    Let $\tilde T^{(1)}, \dots, \tilde T^{(M)}$ be nonnegative simulations that are independent of $X$ and have common mean $ \E[\tilde T]:= \E[\tilde T^{(m)}]  \in (0,\infty)$ for all $m=1, \dots, M$. Then for any $Q \in \mathcal Q$ and $M \in \mathbb N$,
    \begin{equation} \label{eq:null-exchanged-e-power}
    \EQ[\log \Ehat_M(X)] > \EQ[\log T(X)] - \log \left( \frac{M \E[\tilde T]}{M+1} + \frac{\EQ[T(X)]}{M+1} \right).
\end{equation}
\end{theorem}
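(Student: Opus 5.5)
Write $\log \Ehat_M(X) = \log T(X) - \log D_M$, where
\[
D_M = \frac{T(X)}{M+1} + \frac{1}{M+1}\sum_{m=1}^M \tilde T^{(m)}
\]
is the denominator. Taking $\EQ$ expectations, with the expectation also taken over the independent simulations, gives
\[
\EQ[\log \Ehat_M(X)] = \EQ[\log T(X)] - \E[\log D_M].
\]
This presumes $\EQ[\log T(X)]$ is well defined. On the event $T(X)=0$ both sides equal $-\infty$ under the convention $0/0=0$, so I will assume $\QQ(T(X)>0)=1$, or else interpret the inequality in the extended sense. The task then reduces to bounding $-\E[\log D_M]$ from below by $-\log \E[D_M]$.

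The key step is Jensen's inequality applied to the concave function $\log$: $\E[\log D_M] \le \log \E[D_M]$. By linearity,
\[
\E[D_M] = \frac{\EQ[T(X)]}{M+1} + \frac{M\,\E[\tilde T]}{M+1}.
\]
Linearity needs no independence. Independence of $\tilde T^{(m)}$ from $X$ is what lets the simulations' mean stay $\E[\tilde T]$ under $Q$, rather than taking some $X$-dependent value. Substituting gives the bound with $\ge$. If $\EQ[T(X)]=\infty$, the right-hand side is $-\infty$ and the claim is trivial. Otherwise $\E[D_M]\in(0,\infty)$, because $\E[\tilde T]>0$.

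The main obstacle is the strict inequality. Jensen is strict unless $D_M$ is almost surely constant. I would argue that $D_M$ is nondegenerate whenever the $\tilde T^{(m)}$ are nondegenerate, or $T(X)$ is nondegenerate under $Q$. The argument is that $D_M$ is a sum of independent pieces: $T(X)/(M+1)$, which depends only on $X$, and the average of the simulations. If this sum were constant, each independent piece would have to be constant. In the fully degenerate case the two sides coincide, so strictness must rely on that nondegeneracy, and I would state it as an implicit assumption. A second point to check is that $\E[\log D_M]$ is well defined. Because $D_M \ge T(X)/(M+1)$, we have $\log D_M \ge \log T(X) - \log(M+1)$, and $\log D_M \le D_M$. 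So the positive part of $\log D_M$ is integrable whenever $\E[D_M]<\infty$, and the subtraction does not produce $\infty-\infty$ once $\EQ[\log T(X)]$ is finite.
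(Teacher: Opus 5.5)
Your proposal is correct and follows the paper's proof: write $\log \Ehat_M(X)$ as $\log T(X)$ minus the log of the normalized denominator, then apply Jensen to $-\log$, computing the denominator's mean by linearity and the independence of the simulations from $X$. Your handling of strictness is more careful than the paper's, which simply asserts nondegeneracy under $Q$; equality really does hold when $T(X)$ and $\sum_{m=1}^M \tilde T^{(m)}$ are both a.s.\ constant, and the relevant condition is nondegeneracy of that sum, not of each $\tilde T^{(m)}$ individually, unless the simulations are mutually independent.
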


Theorem~\ref{thm:null-exchanged-e-power} provides a relationship between the behavior of $\Ehat_M(X)$ and $T(X)$ under the alternative. At this point, we have not assumed that $T(x)$ is an e-value, meaning that this relationship holds for any unnormalized nonnegative test statistic. Below we enumerate consequences of equation \eqref{eq:null-exchanged-e-power} when $T(x)$ is an e-value.

\begin{corr} \label{corr:finite-M}
Assume $T(x)$ is an e-value for $P$, $\EQ[T(X)] < \infty$ for all $Q \in \mathcal Q$, and $\tilde T^{(m)} = T( Y^{(m)})$, where $Y^{(m)} \iid P$. 
    \begin{enumerate}
        \item If $\EP[T(X)] < 1$, let $M_*^Q = \min_{M \in \mathbb N} \{ \EQ[T(X)] -1 < M(1 - \E[\tilde T]) \}$. Then,  $\EQ[\log \Ehat_M] > \EQ[\log T(X)]$ for all $M \geq M^Q_*$. 
        \item If $\EP[T(X)] < 1$ and $\sup_{Q \in \mathcal Q} \EQ[T(X)] < \infty$, then there exists an $M_* \in \mathbb N$  so that for all $Q \in \mathcal Q$, $ \EQ[\log \Ehat_M] > \EQ[\log T(X)]$ for any $M \geq M_*$.
        \item Suppose $Q \in \mathcal Q$ satisfies $\EQ[T(X)] \leq 1$. Then $\EQ[\log \Ehat_M(X)] > \EQ[\log T(X)]$ for all $M \in \mathbb N$.  
    \end{enumerate} 
\end{corr}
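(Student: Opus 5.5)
The plan is to derive all three parts directly from Theorem~\ref{thm:null-exchanged-e-power}. In each case it suffices to show that the subtracted logarithm is nonpositive, i.e.
\begin{equation*}
    \frac{M \E[\tilde T] + \EQ[T(X)]}{M+1} \leq 1,
\end{equation*}
since then the strict inequality \eqref{eq:null-exchanged-e-power} gives $\EQ[\log \Ehat_M(X)] > \EQ[\log T(X)] - \log(\cdot) \geq \EQ[\log T(X)]$. With $\tilde T^{(m)} = T(Y^{(m)})$ and $Y^{(m)} \iid P$ independent of $X$, we have $\E[\tilde T] = \EP[T(X)] \leq 1$ because $T$ is an e-value. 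The condition above rearranges to $\EQ[T(X)] - 1 \leq M(1 - \E[\tilde T])$, and this single inequality drives every part. Before invoking the theorem we check its hypotheses: $\E[\tilde T] \in (0,\infty)$. Finiteness comes from the e-value property. Positivity holds as long as $T$ is not $P$-a.s.\ zero; in the degenerate case $\E[\tilde T] = 0$ we would have $\Ehat_M = M+1$ whenever $T(X)>0$, and I would either treat that separately or note it is excluded implicitly.

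For part 1, assume $\EP[T(X)] < 1$, so $1 - \E[\tilde T] > 0$. The set of $M$ with $\EQ[T(X)] - 1 < M(1-\E[\tilde T])$ is then nonempty (take $M > (\EQ[T(X)]-1)/(1-\E[\tilde T])$, using $\EQ[T(X)]<\infty$), so $M_*^Q$ is well defined. The right-hand side increases in $M$, so the inequality holds for every $M \geq M_*^Q$. This gives the desired bound, with strict ``$<$'' even making the log strictly negative.

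For part 2, set $M_* = \lceil (\sup_Q \EQ[T(X)] - 1)/(1-\E[\tilde T]) \rceil + 1$, or $1$ if that quantity is nonpositive. Then $M_* \geq M_*^Q$ for every $Q$, and part 1 applies uniformly. For part 3, $\EQ[T(X)] \leq 1$ makes the left side $\EQ[T(X)] - 1 \leq 0$, while the right side $M(1-\E[\tilde T]) \geq 0$, so the condition holds for all $M \in \mathbb N$.

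The main obstacle is bookkeeping rather than substance. One must confirm that Theorem~\ref{thm:null-exchanged-e-power} applies with the common mean $\E[\tilde T] = \EP[T(X)]$ and that the strict inequality survives. It does, because it comes from the theorem itself, and the log term is only required to be $\leq 0$. I would also remark that when $\EQ[\log T(X)] = -\infty$ the statement should be read as holding trivially or with the conventions of the theorem.
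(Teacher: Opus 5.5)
Your proposal is correct and follows essentially the same route as the paper: apply Theorem~\ref{thm:null-exchanged-e-power}, reduce all three parts to the single condition $\EQ[T(X)]-1 \le M(1-\E[\tilde T])$ (which makes the subtracted logarithm nonpositive), and use $\E[\tilde T]=\EP[T(X)]\le 1$ to read off each case. Your side remarks on the degenerate case $\E[\tilde T]=0$ and on $\EQ[\log T(X)]=-\infty$ are edge cases that the paper's proof does not address.
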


That is, if $\EP[T(X)] < 1$, we can improve on the e-power for large enough $M$. For simple hypotheses, we can usually specify exact e-values, e.g. from likelihood ratios. Item 3 of Corollary~\ref{corr:finite-M} is relevant in this case; we can improve on the e-power for alternatives that lead to low power in the original e-value. Note that this guarantee holds for any $M$. The interpretation of this result is that Besag-Clifford e-values boost power for weak signals, such as parameter values close to the null-alternative boundary. Domination in e-power may not hold when $\EQ[T(X)] > 1$ and $\EP[T(X)] = 1$, therefore we recommend transforming an exact e-value into a Besag-Clifford e-value if there is prior knowledge that the signal is weak.

If we use Algorithm~\ref{alg:besag-clifford-parallel} to simulate $Y^{(1)}, \dots, Y^{(M)}$, then we replace $\E[\tilde T]$ in equation \eqref{eq:null-exchanged-e-power} with $\E^{P^*}[T( Y^{(m)})]$, where $P^*$ is the distribution of $Y^{(m)}$ when $X \sim Q$, with density $p^*(y) = \int \int u^J(y,y^\prime)v^J(y^\prime,x)q(x) \diff y^\prime \diff x$. However, items $1$ and $2$ require that there exists an $M$ so that $\EQ[T(X)] - 1 < M(1 - \E^{P^*}[T(Y^{(m)})])$. This condition is more complicated than what we have stated in Corollary~\ref{corr:finite-M} for exact sampling, as $P^*$ depends on $Q$. 

Furthermore, say $T(X)$ is a test-statistic and $E^\prime(X)$ is an e-variable (possibly exact) with $\E^Q [\log T(X)] > \E^Q [\log E(X)]$. From Theorem~\ref{thm:null-exchanged-e-power}, we can see that
\begin{equation*}
    \E^Q[\log \Ehat_M(X)] - \E^Q[\log E^\prime(X)] \genq - \log \E[\tilde T] = - \log \E^P[T(X)]
\end{equation*}
for large $M$. Then, if $T(X)$ is a test statistic that we expect to have a higher log-wealth than $E^\prime(X)$ when $\mathcal Q$ is true but is not an e-variable itself, normalization through exact sampling guarantees type-1 error control and approximate improvement over $E^\prime(X)$ in e-power. Put differently, Theorem~\ref{thm:null-exchanged-e-power} verifies approximate domination in e-power for $\Ehat_M(X)$ over $E^\prime(X)$ when $\EQ[\log T(X)] > \E^Q [\log E^\prime(X)]$ and $E^P[T(X)] = 1 + \epsilon$ for some unknown and small $\epsilon>0$. Section~\ref{sect:sequential-composite-alt} provides a simulated example of the relationship between the e-powers of an exact e-value and a Besag-Clifford e-value constructed from an unnormalized $T(x)$.

\section{Synthetic data} \label{section:simulations}

\subsection{AR$(1)$ process} \label{sect:ar1-sims}

\begin{figure}[t]
    \centering
    \includegraphics[scale=0.45]{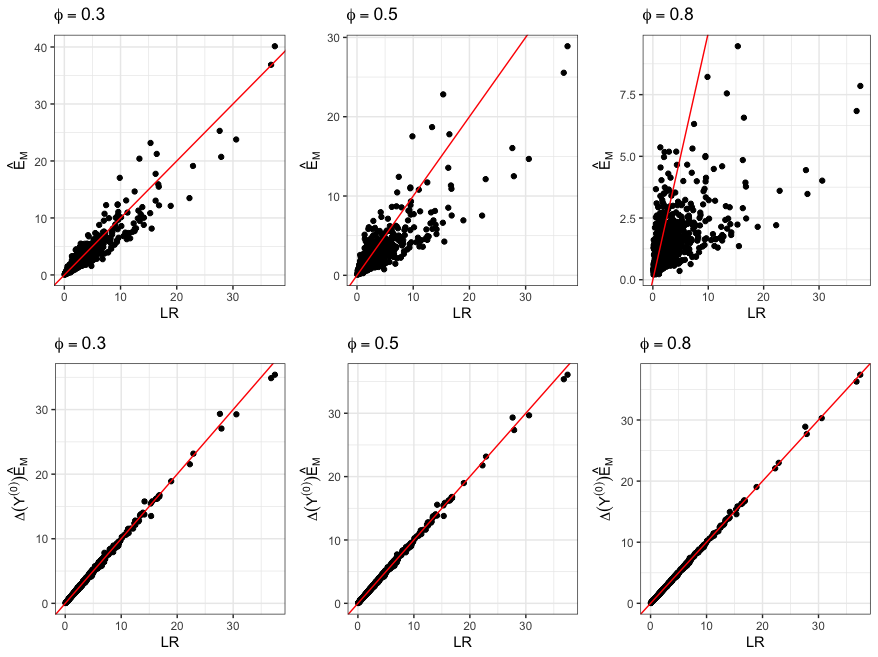}
    \caption{A demonstration of Theorem~\ref{thm:Besag-parallel} for the AR$(1)$ process. The top row compares $\Ehat_M(X)$ to $E(X)$ and the bottom row compares $\Delta^1(Y^{(0)}) \Ehat_M(X)$ to $E(X)$, where $M=1000$. The correlation is varied in $\phi \in \{  0.3, 0.5, 0.8\}$, and we simulate $1,000$ independent replications from the $\N(1,1)$ distribution.}
    \label{fig:AR1-theory}
\end{figure}

Recall the AR$(1)$ process (equation \eqref{eq:AR1-model}), which is stationary for the $\N(0,1)$ distribution when $\gamma^2 = 1 - \phi^2$. We focus on the mean-shift problem, and set $\mathcal Q = \{ \N(1,1) \}$. The one-step autocorrelation in the AR$(1)$ process is varied in $\phi \in \{ 0.3,0.5,0.8 \}$; larger values of $\phi$ increase the KL divergence between the transition kernel and the $\N(0,1)$ distribution and induce higher autocorrelation. For each choice of $\phi$, we generate $1,000$ independent replicates of $X$ under the alternative hypothesis, and set $T(x) = \exp(x)$ and $M=1,000$. 

First, we set the alternative to be the $\N(1,1)$ distribution. When the number of steps is $J=1$, $X$ and $Y^{(m)}$ are highly correlated, as the latter is generated via two forward steps from $X$ (due to reversibility), and $\Delta^1(Y^{(0)})$ is given by equation \eqref{eq:AR1-mean-Delta}.
Through repeated simulation from the alternative, $\Ehat_M(X)$ tends to underestimate $E(X)$ as the correlation increases (Figure~\ref{fig:AR1-theory}, top row). However, multiplying by $\Delta^1(Y^{(0)})$ results in near-equality between the Besag-Clifford e-value and the likelihood ratio, as suggested by Theorem~\ref{thm:Besag-parallel} (Figure~\ref{fig:AR1-theory}, bottom row).

We next fix $\phi=0.5$ and $\mu=2$ and focus on the power of $\textbf{1}(\Ehat_M(X) \geq 1/0.05)$, with $T(x) = \exp(\mu x) = \exp(2x)$. For each number of steps $J \in \{ 1,3,5,10,20 \}$, we simulate $2,500$ independent replications of $X \sim \N(2,1)$ and use these to estimate $Q(\Ehat_M(X) \geq 1/0.05)$ when $M \in \{10,50,100,500,1000,2500,5000\}$.
There is a substantial jump in power from $J=1$ to $J=3$ (Figure~\ref{fig:log-pow-ar1}), though the degree of improvement of the power diminishes as $J$ increases, which is expected due to the exponential decay of $\log \Delta^J(X)$ in equation \eqref{eq:AR1-mean-Delta-normal}. $\textbf{1}(\Ehat_M(X) \geq 1/\alpha)$ becomes more powerful for increasing $M$ (stabilizing for $M>100$) and resembles the power of $\textbf{1}(E(X)\geq 1/\alpha)$.

\begin{figure}[t]
    \centering
    \includegraphics[width=0.7\linewidth]{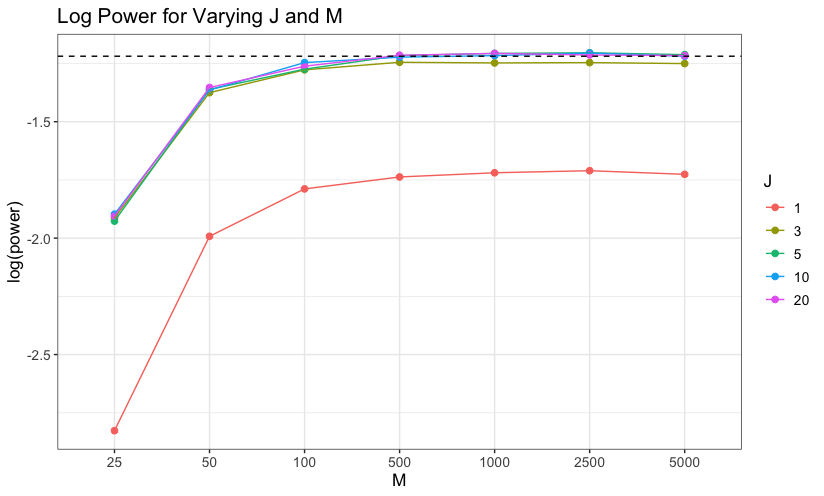}
    \caption{Estimated power (with logarithms taken for visualization) of $\textbf{1}(\Ehat_M(X) \geq 1/0.05)$ for different choices of the number of steps ($J$) and samples ($M$) in Algorithm~\ref{alg:besag-clifford-parallel} and the AR$(1)$ process. The dashed horizontal line is the estimate of $Q(E(X)\geq 1/0.05)$.}
    \label{fig:log-pow-ar1}
\end{figure}

\subsection{Sequential product of experts} \label{sect:sequential-product-of-experts}
Consider the simple hypotheses $Q = \N(\mu, \sigma^2)$ and $P$ being a product of $W=2$ noncentral t-distributions (equation \eqref{eq:product-of-experts}). At time $i$, we calculate a ULR
\begin{equation*}
    T(x_i) = \frac{e^{-\frac{(x_i-\mu)^2}{2}}}{\sqrt{2\pi \sigma^2}} \prod_{w=1}^2 \left( 1 + \frac{(x_i-\psi_w)^2}{\theta_w} \right)^{\frac{\theta_w+1}{2}}.
\end{equation*}
Over $500$ independent replications, we simulate $X_i \sim \N(\mu, \sigma^2)$ for $i=1, \dots, n=50$ time points. We compute the ULR e-process for $T(x)$ using a random-walk Metropolis algorithm, available in the \texttt{MCMCpack} R package \citep{martin2011mcmcpack}. Random-walk Metropolis is reversible, thus a single Besag-Clifford sample is obtained via applying the sampler twice (first initialized at $X$, then at $Y^{(0)}$). Motivated Proposition~\ref{prop:e-bar}, we set $J=4$, $M=25$, and $S \in \{ 1,4,10 \}$, the latter variable being the number of chains in equation \eqref{eq:indpenendent-MCMC-avg}. For the null hypothesis, we set $\theta_1 = 1$, $\theta_2 = 10$, $\psi_1 = -3$, and $\psi_2 = 0$, resulting in an asymmetric density centered at $0$, whereas the alternative is $\mu=0$, $\sigma^2=1$. 
Under the alternative, the ULR e-process grows exponentially (Figure~\ref{fig:ulr-process-t-product}). Increasing the number of chains always increases the rate of $\log \widehat E_t$ compared to the single chain e-process. 
However, the rate of the ULR e-process for $S=10$ is slower than the rate for $S=4$, suggesting that the growth rate is not always monotonic in $S$. This behavior may indicate that the running average LRT (Section~\ref{sect:multiple-markov-chains}) could oscillate as the number of chains is increased.

\begin{figure}[t]
    \centering
    \includegraphics[width=0.9\linewidth]{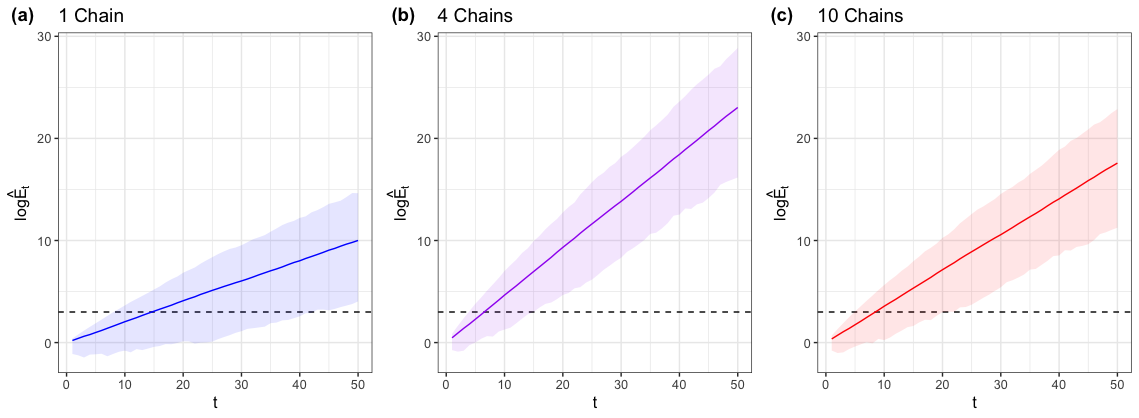}
    \caption{Averages and 95\% confidence bands of $500$ replications of the Besag-Clifford ULR e-process under the alternative, $\N(0,1)$, for $n=50$ time points. We increase the number of chains $S \in \{1,4,10\}$. The dotted line shows $\log(1/0.05)$.}
    \label{fig:ulr-process-t-product}
\end{figure}

\subsection{Sequentially testing a composite alternative hypothesis} \label{sect:sequential-composite-alt}

We return to the problem of inferring Gaussian parameters but with a composite alternative: set $\mathcal P = \{ \N(0,1) \}$ and $\mathcal Q = \{ \N(\mu, \sigma^2): \mu \in \mathbb R, \sigma^2>0\}$. At the $t$th time point, let
\begin{equation} \label{eq:unnormalized-sequential-normal}
    T(x_t \mid x_{1:(t-1)}) = \frac{\N(x_t; \bar x_t, \hat \sigma_t^2)}{\mathcal N(x_t; 0, 1)},
\end{equation}
where $\bar x_t = (1/t)\sum_{i=1}^t x_i$, $\hat \sigma_t^2 = (1/t)\sum_{i=1}^t(x_i-\bar x_t)^2$, and $\N(z; \mu, \sigma^2)$ is the pdf of the $\N(\mu, \sigma^2)$ distribution evaluated at $z$. $T(x_t \mid x_{1:(t-1)})$ is not an e-value, and hence its sequential product is not an e-process. Instead, we use $T(x_t \mid x_{1:(t-1)})$ as the test statistic to define a Besag-Clifford e-process, $\widehat E_t = \prod_{i=1}^t \{ 1 - \lambda_{i-1} + \lambda_{i-1} \widehat U_i(X_i) \}$. At each time point, we update the Besag-Clifford e-process by simulating from the $\N(0,1)$ distribution for $M=1000$ iterations via exact Monte Carlo simulation (independently of $X_t$). For comparison, let
\begin{equation*}
    D(x_t \mid x_{1:(t-1)}) = \frac{\N(x_t; \bar x_{t-1}, \hat \sigma_{t-1}^2)}{\mathcal N(x_t; 0, 1)},
\end{equation*}
meaning that $D_t = \prod_{i=1}^t \{ 1-\lambda_{i-1} + \lambda_{i-1} D(X_i \mid X_{1:(i-1)}) \}$ is an e-process \citep{wasserman2020universal}; in this example, we refer to $D_t$ as the LR e-process. For both methods, we use the GRAPA betting strategy \citep{waudbysmith2023estimating,ramdas2025hypothesis} as a method for selecting $\lambda_{i-1}$.

\begin{figure}[t]
    \centering
    \includegraphics[width=1.0\linewidth]{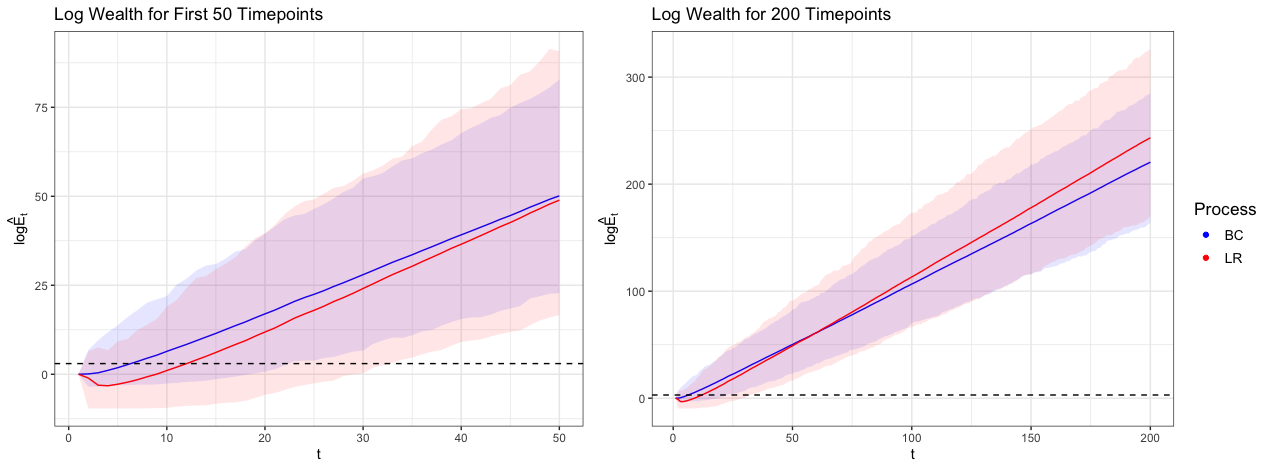}
    \caption{The averages and 95\% confidence bands for the Besag-Clifford (BC) and likelihood ratio (LR) e-processes for testing $\{\N(0,1)\}$ against $\{\N(\mu, \sigma^2):\mu \in \mathbb R, \sigma^2 > 0\}$. The results are taken from $1,000$ independent simulations. The left shows trajectories from the first $50$ time points, whereas the right shows the whole $200$ time points. Both processes use the GRAPA betting strategy. The dotted line is $\log(1/0.05)$.}
    \label{fig:bc-vs-lr}
\end{figure}

We generate $1000$ independent experiments, and sample from the alternative distribution $\N(1,4)$ for $n=200$ time points.
The LR e-process tends to dip at early time points, before recovering and increasing; in contrast, the Besag-Clifford e-process remains stable, evolving at a near-exponential rate on average (Figure~\ref{fig:bc-vs-lr}). In addition, at the $\alpha = 0.05$ significance level, the stopping time $\tau^{\tx{BC}} = \min_{t \geq 1} \{\widehat E_t \geq 1/\alpha\}$ tends to be smaller than the analogous stopping time $\tau^{\tx{LR}} = \min_{t \geq 1} \{D_t \geq 1/\alpha\}$. The wealth of the LR e-process catches up with that of the Besag-Clifford e-process near time $t=50$, although the width of the confidence band is narrower for the latter than for the former across all points. The faster rate of the LR process for increasing $t$ is due to convergence of $D(x_t \mid \bar x_{t-1})$ to $ D^*(x_t) = \N(x_t;1,4)/\N(x_t;0,1)$, resulting in an e-process with an optimal growth rate. These results suggest an intriguing strategy of running the Besag-Clifford e-process initially, then switching to the LR e-process after a certain time point, e.g., $\tau^{\tx{BC}}$, a concept known as \textit{optional continuation} \citep{grunwald2024safe}. This strategy is appealing in more general testing problems where samples may be expensive.

\section{Shapley Supercluster velocities} \label{section:application}
The Shapley Supercluster is a vast concentration of galaxies located in the Centaurus constellation, its center nearly $650$ million light-years from earth \citep{proust2006Shapley}. The Shapley galaxy dataset \citep{drinkwater2004galaxy} measures the right ascension, declination, magnitude, velocity, and velocity standard error for $n=4,215$ galaxies belonging to the Shapley supercluster. The supercluster contains several galaxy clusters, and galaxy clusters are often reflected as multiple modes in the velocity measurement \citep{roeder1990galaxy}. In this section, we illustrate our methodology with a test for the number of t-experts, or $W$, in the PoE model for the galaxy velocities in the Shapley Supercluster; see Section~\ref{section:examples}.

We split the data randomly into two equal-sized parts, $\mathcal D_a$ and $\mathcal D_b$. We compute estimates of the centers $\hat \psi_w$, the scales $\hat \sigma_w$, and the degrees of freedom $\hat \theta_w$ via score matching with stochastic gradient descent in PyTorch \citep{paszke2019pytorch} using $\mathcal D_a$. For a single observation $i \in \mathcal D_b$, null hypothesis $W_0$, and alternative hypothesis $W_1$, we set the test statistic to be
\begin{equation*}
    T(x_i) = \prod_{w_o=1}^{W_0} \lb 1 + \frac{1}{\hat \theta_{w_0}} \left( \frac{x_i - \hat \psi_{w_0}}{\hat \sigma_{w_0}} \right)^2 \rb^{\frac{\hat \theta_{w_0} + 1}{2}} \times \prod_{w_1=1}^{W_1} \lb 1 + \frac{1}{\hat \theta_{w_1}} \left( \frac{x_i - \hat \psi_{w_1}}{\hat \sigma_{w_1}} \right)^2 \rb^{\frac{-\hat \theta_{w_1} + 1}{2}}.
\end{equation*} 
Thus, $T(x_i)$ is a ULR. The number of experts are set to $W_0=5$ and $W_1=25$, meaning that the alternative is a richer model than the null. Using the implementation of HMC in blackJAX \citep{cabezas2024blackjax}, we compute the ULR process $\widehat E_t = \prod_{i=1}^t \widehat U_i(X_i)$ and fix $J=500$ and $M = 1000$. The velocity ULR process exhibits exponential decay with increasing $t$, decisively favoring the simpler model, $W=5$ (Figure~\ref{fig:Shapley}); each individual e-value is $\widehat U_i(x_i) < 1$, thus we fail to reject $W=5$ at every time point.

\begin{figure}
    \centering
    \includegraphics[width=0.95\linewidth]{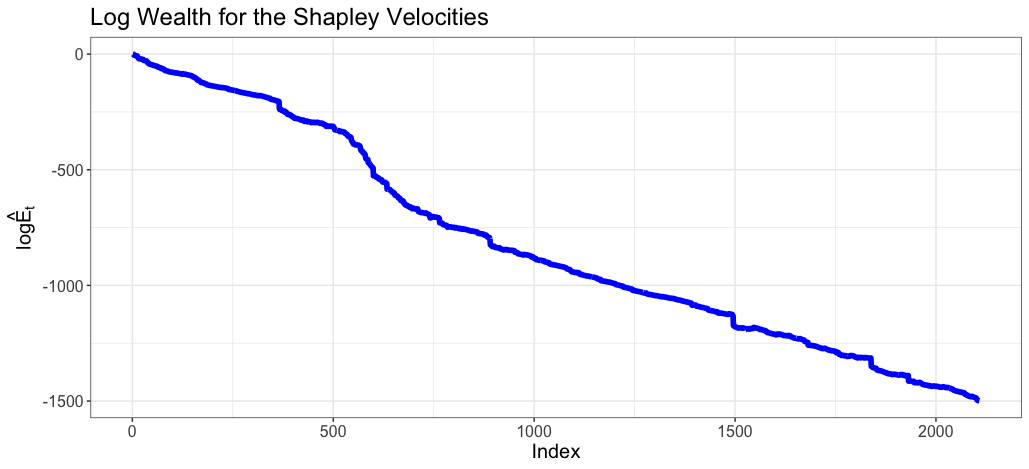}
    \caption{Log-wealth of the ULR process for group $\mathcal D_b$ of the Shapley dataset when comparing $W=5$ and $W=25$ for PoE models pretrained on $\mathcal D_a$. }
    \label{fig:Shapley}
\end{figure}

\section{Discussion} \label{section:discussion}

This article proposes Besag-Clifford e-values, in which drawing null exchangeable samples normalizes a test statistic to have expectation at most $1$ under the null. A key focus of this work is on likelihood ratio testing for unnormalized distributions, and we show that approximate log-optimality is attained for infinitely many samples and a well-mixing MCMC algorithm. Incorporating multiple chains improves the e-power in comparison to the single chain case, and we expand our methodology to encompass composite hypotheses and uncertainty quantification. The scope of our approach exceeds unnormalized likelihood ratio testing; any test statistic and hyperparameters $M>0$ and $J>0$ provide type-1 error control, so long as Algorithm~\ref{alg:besag-clifford-parallel} can be implemented. We also introduce the Besag-Clifford e-process for sequential testing, a special case being the ULR process. As $M \to \infty$, the ULR process converges to the corresponding likelihood ratio process, which has an optimal growth rate. We illustrate our theoretical findings about the AR$(1)$ process using synthetic data. We then empirically evaluate the impact of mixing, $M$, $J$, and the number of chains $S$ on the behavior of Besag-Clifford e-values under the alternative. The third simulation study compares a Besag-Clifford e-process to a universal inference e-process, where the stopping time of the former tends to be smaller than that of the latter. Finally, we provide an illustration of our approach to galaxy velocities in the Shapley Supercluster.

A natural extension of our work is multiple testing with Besag-Clifford e-values. So long as each hypothesis is of finite cardinality and consists of distributions that can be sampled from using MCMC, we can readily apply our approach to multiple testing using the e-BH procedure \citep{wang2022false}. A more difficult problem is to account for null hypotheses $\mathcal P$ of infinite cardinality. Running Algorithm~\ref{alg:besag-clifford-parallel} for infinitely many distributions is infeasible, so one would likely need to rely on alternate methods for designing valid e-values. For instance, co-sufficient (or approximately co-sufficient) sampling simulates $Y^{(1)}, \dots, Y^{(M)}$ that are (approximately) exchangeable with $X$ for any $P \in \mathcal P$; see \cite{barber2022testing} and references therein. Alternately, if $\EP[T(X)] < \infty$, one could design an approximately valid e-value by sampling $Y^{(m)} \iid P^* = \tx{arg max}_{P \in \mathcal P} \EP[T(X)]$, though this sampling strategy is not straightforward. Another possible approach is to use Monte Carlo maximum likelihood \citep{geyer1992constrained} to define an approximately valid e-value. Finally, an exciting avenue of future research is designing sampling methods for estimating the numeraire e-variable for composite nulls \citep{larsson2025numeraire}.

This article primarily focuses on the parallel method of \cite{besag1989generalized}, as it is computationally and conceptually appealing for our applications of interest. However, \cite{besag1989generalized} also introduced the \textit{serial method} for Monte Carlo significance testing. Proposition 3.7 of \cite{howes2026markov} suggests that augmenting the serial method with random permutations results in $\Ehat_M(X)$ converging to the log-optimal e-variable as $M \to \infty$ under regularity conditions for the MCMC algorithm. Hence, it would be interesting to compare the empirical e-power and computational efficiency of $\Ehat_M(X)$ when using either the parallel or serial method. More generally, it would be insightful to analyze the type-1 error control of $\Ehat_M(X)$ when $(X,Y^{(1)}, \dots, Y^{(M)})$ are not exchangeable,
along the lines of similar analyses for conformal prediction \citep{barber2023confomral}. The ergodic theorem would imply type-1 error control as $M \to \infty$ (i.e., the denominator in equation \eqref{eq:biased-theorem-1} would be equal to $1$), but validity of $\Ehat_M(X)$ for any finite $M$ may not hold without careful specification of the sampling algorithm.

\section*{Supplemental material}
The supplemental material contains proofs for results that were not proved in the text and technical details on the simulation studies and Shapley Supercluster application. Code and documentation is available at \url{github.com/adombowsky/BCevalues}.

\section*{Acknowledgments}
AD and BEE were funded in part by grants from the Parker Institute for Cancer Immunology (PICI), the Chan-Zuckerberg Institute (CZI), the Biswas Family Foundation, NIH NHGRI R01 HG012967, and NIH NHGRI R01 HG013736. AR was funded by NSF grant DMS-2310718. BEE is a CIFAR Fellow in the Multiscale Human Program.

\section*{Competing interests}
BEE is on the Scientific Advisory Board for ArrePath Inc, GSK AI for Cancer, and Freenome; she consults for Neumora.

\bibliography{sources}
\bibliographystyle{apalike}

\appendix

\section*{Supplemental material}

\section{Proofs} \label{supp:proofs}

\subsection{Proof of Theorem~\ref{thm:ergodic-log-optimal-iid}}

\begin{proof}

    First, we introduce notation related to product measures. For sets $A_0, A_1, \dots, A_M$, let $\bigtimes_{m=0}^\infty A_m$ be their Cartesian product; for $\sigma$-algebras $\mathcal F_0, \mathcal F_1, \dots, \mathcal F_M$, let $\bigotimes_{m=0}^M \mathcal F_m = \sigma\left( \lb \bigtimes_{m=0}^M A_m: A_m \in \mathcal F_m \: \forall m=0, \dots, M \rb \right)$; and for probability measures $\P_1, \dots, \P_M$, let $\bigtimes_{m=0}^M \P_m$ be their product measure. Then, let $\Omega^\infty = \bigtimes_{m=0}^\infty \Omega$, $\mathcal B^\infty = \bigotimes_{m=0}^\infty \mathcal B$, and $\mathbb L = W \times \bigtimes_{m=1}^\infty P$, where $W \in \{ P, Q \}$, defining a probability space $(\Omega^\infty, \mathcal B^\infty,\mathbb L)$. 
    
    Note that $\mathbb L = W \times \bigtimes_{m=1}^\infty P$ is the product measure for $(X, Y^{(1)}, Y^{(2)}, \dots)$ over $(\Omega^\infty, \mathcal B^\infty)$. For any $\bs \omega = (\omega_0, \omega_1, \omega_2, \dots ) \in \Omega^\infty$, let $T^{0}(\bs \omega) = T(X(\omega_0))$, $\tilde T^{(m)}(\bs \omega) = T(Y^{(m)}(\omega_m))$, and $E(\bs \omega) := E(X(\omega_0)) = q(X(\omega_0))/p(X(\omega_0))$. If $T^0(\bs \omega) = 0$, then $E(\bs \omega) = 0$ and, under the convention that $0/0=0$, $\hat E_M(\bs \omega) = 0$ for all $M \in \mathbb N$, verifying the claim. For the remainder of the proof, we assume that $\omega_0 \in \{ \omega: D(\omega) > 0 \}$ and that $\bs \omega$ is not contained in a null set for $\mathbb L$.

    Let $0<K<\infty$ be defined by $\int T(x)p(x) \diff x = 1/K$, implying $E(X) = K T(X)$. By the strong law of large numbers,
        \begin{equation*}
        \lim_{M \to \infty} \frac{1}{M}\sum_{m=1}^M \tilde T^{(m)}(\bs \omega) = 1/K     \end{equation*}
    For any $\bs \omega$ satisfying the above limit, there exists $B(\bs \omega) \geq 0$ and $M(\bs \omega) \in \mathbb N$ such that
    \begin{equation*}
        \frac{1}{M}\sum_{m=1}^M  \tilde T^{(m)}(\bs \omega) \leq B(\bs \omega) \tx{ for all } M\geq M(\bs \omega),
    \end{equation*}
    Thus, given $\epsilon>0$, since $D(\omega)< \infty$ for all $\omega \in \Omega$, there exists $M^\prime(\bs \omega) \in \mathbb N$ so that for all $M \geq M^\prime(\bs \omega)$,
    \begin{equation*}
        \frac{T^0(\bs \omega)}{M+1} + \frac{1}{M}\sum_{m=1}^M \tilde T^{(m)}(\bs \omega) \left( 1 - \frac{M}{M+1} \right) < \epsilon,
    \end{equation*}
    meaning that $\lim_{M \to \infty} \{ T^0(\bs \omega)/(M+1) + \sum_{m=1}^M \tilde T^{(m)}(\bs \omega)/(M+1)\} = 1/K$. Since $T^0(\bs \omega) = T(\omega_0) > 0$, continuity implies that for large enough $M$,
    \begin{equation*}
        \bigg |\frac{(M+1)}{T^0(\bs \omega) + \sum_{m=1}^M \tilde T^{(m)}(\bs \omega)} - K  \bigg | < \frac{\epsilon}{T^0(\bs \omega)}. \qedhere
    \end{equation*}
\end{proof}

\subsection{Proof of Theorem~\ref{thm:Besag-parallel}}

\begin{proof}
     Initially, suppose that $T(x) = q(x)/p(x)$. First, note that $Y^{(m)} \ind X \mid Y^{(0)}$ and, for any $y^{(0)}$, the weak law of large numbers guarantees that,
    \begin{equation*}
        \frac{1}{M} \sum_{m=1}^M T(Y^{(m)}) \overset{pr.}{\longrightarrow} \Delta^J(y^{(0)}) \mid Y^{(0)} = y^{(0)}.
    \end{equation*}
    Then $(1/M)\sum_{m=1}^M T(Y^{(m)}) \overset{pr.}{\to} \Delta^J(Y^{(0)})$ by the dominated convergence theorem under $V^J(x,\cdot)$, the conditional distribution of $Y^{(0)} \mid X=x$. The continuous mapping theorem yields that for all $x$,
    \begin{equation} \label{eq:conditional-pr-limit}
        \frac{(M+1)T(x)}{ T(x) + \sum_{m=1}^M T(Y^{(m)}) } \overset{pr.}{\longrightarrow} \frac{T(x)}{\Delta^J(Y^{(0)})}, %\bigg | X=x. 
    \end{equation}
    with respect to $V^J(x,\cdot)$. Applying the dominated convergence theorem with $X \sim W$ gives the result. For $T(x) \propto q(x)/p(x)$, we can multiply the numerator and denominator of $\Ehat_M(x)$ by the unknown normalizing constant $K$, yielding the same convergence in probability statement as in equation \eqref{eq:conditional-pr-limit}. Therefore, the result holds for any $T(x) \propto q(x)/p(x)$.
\end{proof}

\subsection{Proof of Proposition~\ref{prop:mixing-time}}
\begin{proof}
    \begin{gather*}
        \KL(V^{J_{\KL}(\epsilon)}(y,\cdot),P) < 2 \epsilon^2 \tx{ for all } y \\
        \implies  q(y)\KL(V^{J_{\KL}(\epsilon)}(y,\cdot),P) < 2 \epsilon^2 q(y) \tx{ for all } y  \\
        \implies \int q(y)\KL(V^{J_{\KL}(\epsilon)}(y,\cdot),P) \diff y < 2 \epsilon^2.
    \end{gather*}
    The result follows by using the inequality in equation \eqref{eq:delta-e-power}. 
\end{proof}

Using a similar technique, we can bound the Lesbesgue measure of the difference between the numerator and denominator in $\Delta^J(Y^{(0)})$.
\begin{prop} \label{prop:delta-difference}
    Let $J(\epsilon)$ be the mixing time of $\Vdot$. Then,
    \begin{equation}
       \frac{1}{2}\int  \int q(y) |v^{J(\epsilon)}(y,y^{(0)}) - p(y^{(0)}) | \diff y \diff y^{(0)} < \epsilon.
    \end{equation}
\end{prop}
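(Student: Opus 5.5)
The plan mirrors the proof of Proposition~\ref{prop:mixing-time}, with total variation in place of the KL divergence. The key observation is that the inner integrand already has a total variation form. For each fixed $y$, the definition of total variation distance gives
\begin{equation*}
    \frac{1}{2}\int |v^{J(\epsilon)}(y,y^{(0)}) - p(y^{(0)})| \diff y^{(0)} = d_\TV\bigl(V^{J(\epsilon)}(y,\cdot),P\bigr).
\end{equation*}
By the definition of the mixing time $J(\epsilon)$, this quantity is strictly less than $\epsilon$, uniformly in $y$.

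The steps are as follows.

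\textbf{Step 1.} Since the integrand $q(y)|v^{J(\epsilon)}(y,y^{(0)}) - p(y^{(0)})|$ is nonnegative, Tonelli's theorem lets us exchange the order of integration. We integrate over $y^{(0)}$ first, giving
\begin{equation*}
    \frac{1}{2}\int\int q(y)|v^{J(\epsilon)}(y,y^{(0)}) - p(y^{(0)})|\diff y^{(0)}\diff y = \int q(y)\, d_\TV\bigl(V^{J(\epsilon)}(y,\cdot),P\bigr)\diff y.
\end{equation*}

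\textbf{Step 2.} We bound pointwise. For every $y$ we have $q(y)\,d_\TV(V^{J(\epsilon)}(y,\cdot),P) \le q(y)\max_{y'} d_\TV(V^{J(\epsilon)}(y',\cdot),P)$.

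\textbf{Step 3.} We integrate against $q$. Since $\int q = 1$, the total is at most $\max_{y'} d_\TV(V^{J(\epsilon)}(y',\cdot),P)$, which is strictly less than $\epsilon$.

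The only delicate point is keeping the inequality strict. This works because the maximum in the definition of $J(\epsilon)$ is attained, or at least the supremum is $<\epsilon$, so $\int q(y)\cdot c \diff y = c < \epsilon$ where $c$ denotes that maximum. If the definition only guarantees the pointwise bound $d_\TV(V^{J(\epsilon)}(y,\cdot),P) < \epsilon$ without a uniform gap, strictness still follows. The function $\epsilon q(y) - q(y)\,d_\TV(V^{J(\epsilon)}(y,\cdot),P)$ is nonnegative and strictly positive on the set $\{q>0\}$, which has positive measure. Hence its integral is positive.

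Finally, as a remark: by Jensen's inequality (convexity of $|\cdot|$), the same bound controls $d_\TV(Q^*,P)$. This is the total-variation analogue of equation~\eqref{eq:delta-e-power}.
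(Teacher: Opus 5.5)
Your proposal is correct and follows the same route as the paper: you bound the inner total-variation integral uniformly in $y$ by the definition of $J(\epsilon)$, multiply by $q(y)$, and integrate. Your write-up is slightly more careful than the paper's in two places: you invoke Tonelli explicitly to swap the order of integration, and you justify strictness via positivity on $\{q>0\}$, whereas the paper asserts the pointwise strict bound $<\epsilon q(y)$ even where $q(y)=0$.
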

\begin{proof}
    By definition of the Markov chain mixing time, 
    \begin{gather*}
        \frac{1}{2} \int | v^{J(\epsilon)}(y,y^{(0)}) - p(y^{(0)})| \diff y^{(0)} < \epsilon \tx{ for all } y \\
        \implies \frac{1}{2} \int q(y)| v^{J(\epsilon)}(y,y^{(0)}) - p(y^{(0)})| \diff y^{(0)} < \epsilon q(y) \tx{ for all }y \\ 
        \implies \frac{1}{2} \int \int q(y)| v^{J(\epsilon)}(y,y^{(0)}) - p(y^{(0)})| \diff y^{(0)} \diff y < \epsilon \qedhere.
    \end{gather*}
\end{proof}
Without making an additional assumptions (e.g., that there is a non-zero lower bound on $v^J(y,\cdot)$), we cannot translate this result directly to the e-power. However, Proposition~\ref{prop:delta-difference} does provide insight onto the total variation distance between $P$ and $Q^*$, i.e. the distributions of $Y^{(0)}$ under the null and alternative, respectively.

\subsection{Proof of Proposition~\ref{prop:Besag-multi-chain}}
\begin{proof}
    
    An application of the weak law gives that, conditional on $X=x$,
    \begin{equation*}
        \frac{1}{S} \sum_{s=1}^S \frac{E(x)}{\Delta^J(Y_s^{(0)})} \overset{pr.}{\longrightarrow} E(x)\EQ \bigg [ \frac{1}{\Delta^J(Y^{(0)})} \mid X=x \bigg],
     \end{equation*}
     and applying the dominated convergence theorem completes the proof.
\end{proof}

\subsection{Proof of Theorem~\ref{thm:null-exchanged-e-power}}

\begin{proof}
        First, we have that
        \begin{equation*}
            \EQ[\log \Ehat_M(X)] = \EQ[\log T(X)] + \E^\Q \lb - \log \left( \frac{1}{M+1} T(X) + \frac{1}{M+1} \sum_{m=1}^M \tilde T^{(m)} \right) \rb.
        \end{equation*}
        By Jensen's inequality,
        \begin{gather*} \label{eq:Jensen's-power}
            \EQ \lb - \log \left( \frac{1}{M+1} T(X) + \frac{1}{M+1} \sum_{m=1}^M \tilde T^{(m)} \right) \rb \\ > -\log \left( \frac{1}{M+1}\sum_{m=1}^M \E[\tilde T^{(m)}] + \frac{\EQ[T(X)]}{M+1} \right) = -\log \left( \frac{M \E[\tilde T]}{M+1} + \frac{\EQ[T(X)]}{M+1} \right), 
        \end{gather*}
        where the inequality is strict since $-\log t$ is strictly convex and $T(X), \tilde T^{(1)}, \dots, \tilde T^{(M)}$ are non-degenerate when $Q$ is true.  
    \end{proof}

\subsection{Proof of Corollary~\ref{corr:finite-M}}

\begin{proof}
   The right hand side of equation \eqref{eq:null-exchanged-e-power} is positive if and only if $\EQ[T(X)] - 1 < M(1 - \E[\tilde T])$. The existence of $M_*^Q$ then follows under the assumption $\E[\tilde T] = \E^P[T(X)] < 1$. If $\sup_{Q \in \mathcal Q} \EQ[T(X)] < \infty$, the right hand side of equation \eqref{eq:null-exchanged-e-power} is positive for all $Q \in \mathcal Q$ at $M_* = \min_{M \in \mathbb N} \{ \sup_{Q \in \mathcal Q} [T(X)] < M(1 - \E[\tilde T]) + 1 \}$, proving existence of $M_*.$ Lastly, if $\EQ[T(X)] \leq 1$, then the remainder term on the right hand side of equation \eqref{eq:null-exchanged-e-power} is always nonnegative, so $\EQ[\log \Ehat_M(X)] > \EQ[\log T(X)]$ for any $M.$
\end{proof}

\section{Details on simulation studies}
R code is available at \url{github.com/adombowsky/BCevalues}. The AR$(1)$ process (Section~\ref{sect:ar1-sims}) is simulated via equation \eqref{eq:AR1-model} in R. The PoE model (Section~\ref{sect:sequential-product-of-experts}) is sampled from using the \texttt{MCMCpack} \citep{martin2011mcmcpack} and the \texttt{MCMCmetrop1R()} function. The different runs use distinct random number seeds for sampling. All hyperparameters in \texttt{MCMCmetrop1R()} are set to their defaults. Confidence bands (Figure~\ref{fig:ulr-process-t-product} and Figure~\ref{fig:bc-vs-lr}) refer to the $2.5\%$ and $97.5\%$ empirical quantiles across the replications at each time point $t$. For the sequential test of $\mathcal P = \{\N(0,1)\}$ vs. $\mathcal Q = \{ \N(\mu, \sigma^2):\mu \in \mathbb R, \sigma^2 > 0\}$ (Section~\ref{sect:sequential-composite-alt}), the GRAPA betting strategy is implemented via the definition of an empirically adaptive e-process \citep{ramdas2025hypothesis}, which is
\begin{equation} \label{eq:supplement-GRAPA}
    \lambda_t \in \underset{\lambda \in [0,1]}{\tx{arg max}} \frac{1}{t-1} \sum_{i=1}^{t-1} \log (1 - \lambda + \lambda \widehat U_i(X_i) ),
\end{equation}
and is calculated at every time point $t=1, \dots, 50$ using the \texttt{optimize()} function. GRAPA for the LR process is computed similarly with $\widehat U_i(X_i)$ in equation \eqref{eq:supplement-GRAPA} replaced with $D(X_i \mid X_{1:(i-1)})$.

\section{Details on application}

The Shapley galaxy dataset is \href{https://sites.psu.edu/astrostatistics/datasets-shapley-galaxy-dataset/}{publicly available} as supplementary material for \cite{feigelson2012modern}. Python code for our application to the Shapley dataset is available at \url{github.com/adombowsky/BCevalues}. We center and scale the velocity to have zero mean and unit standard deviation. Score matching is implemented with the Adam stochastic gradient descent algorithm in \texttt{torch.optim.Adam} \citep{adam2014method}. The unnormalized log-likelihood for the PoE model is $\mathcal L(x_{\mathcal D_a}) = \sum_{i \in \mathcal D_a} \mathcal L(x_i)$, where
\begin{equation*}
    \mathcal L(x_i) = - \sum_{w_0=1}^{W_0} \left( \frac{\theta_{w_0}+1}{2} \right) \log \left( 1 + \frac{1}{\theta_{w_0}} \left( \frac{x_i - \psi_{w_0}}{\sigma_{w_0}} \right)^2  \right).
\end{equation*}
The score-matching loss is $s(x_{\mathcal D_a}) = (1/|\mathcal D_a|) \sum_{i= \in \mathcal D_a} s(x_i)$, with
\begin{equation*}
    s(x_i) =  \frac{1}{2}\left( \frac{d}{dx} \mathcal L(x_i) \right)^2 + \frac{d^2}{dx^2} \mathcal L(x_i).
\end{equation*}

\begin{figure}[t]
    \centering
    \includegraphics[width=0.9\linewidth]{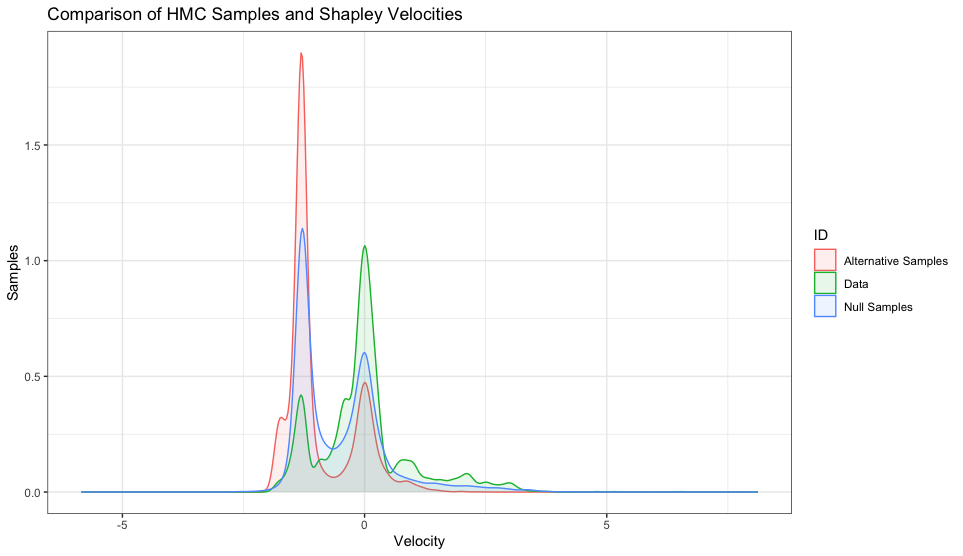}
    \caption{Density estimates of the scaled Shapley velocities (both $\mathcal D_a$ and $\mathcal D_b$) and $40000$ HMC samples from the null pre-trained null model ($W=5$) and pre-trained alternative model ($W=25$).}
    \label{fig:shapley-comparison}
\end{figure}

For both hypotheses, the PoE model is trained using $3000$ iterations of Adam. The HMC samples are generated with the HMC sampler in \texttt{blackjax.hmc} \citep{cabezas2024blackjax}. We set the step size parameter to be $0.1$, the inverse mass matrix as the identity, and the number of integration steps to be $1000$. In our code, we compute the statistic $\tilde T(x) = \log T(x)$, then calculate $\widehat U_i(X_i)$ with
\begin{equation*} \label{eq:log-sum-exp}
\begin{gathered}
        \log \frac{\widehat U_i(X)}{M+1} = \tilde T(X) - \tilde T^* - \log\left( \exp(\tilde T(X)-\tilde T^*) + \sum_{m=1}^M\exp (\tilde T(Y^{(m)}) - \tilde T^*)\right), \\
        \tilde T^* = \max \lb \tilde T(X), \tilde T(Y^{(1)}), \dots, \tilde T(Y^{(M)}) \rb.
\end{gathered}
\end{equation*}
As a follow-up to the conclusions of the ULR e-process, we simulate $10,000$ iterations across $4$ independent chains from both the null and the alternative model using \texttt{blackjax.hmc}. Samples from the pre-trained models tend to overestimate and underestimate the lower and higher mode in the velocities (Figure~\ref{fig:shapley-comparison}). However, the modes of the null samples are closer to those in the actual data than the alternative model. Since the higher mode has a larger null density estimate, this may lead to the log-likelihood of $\mathcal D_b$ being higher for the null than the alternative, ultimately resulting in decay of the ULR e-process.

\end{document}